\newcommand{\Dslash}{{\slash{\kern -0.5em}\partial}}
\newcommand{\Aslash}{{\slash{\kern -0.5em}A}}
\def\sqr#1#2{{\vcenter{\hrule height.#2pt
     \hbox{\vrule width.#2pt height#1pt \kern#1pt
        \vrule width.#2pt}
     \hrule height.#2pt}}}
\def\thinspace{\kern .16667em}
\def\xp{x_{{\kern -.2em}_\perp}}
\def\subp{_{{\kern -.2em}_\perp}}
\def\defeq{:{\kern -0.5em}=}
\DeclareMathOperator{\re}{\mathrm{Re}}
\DeclareMathOperator*{\essinf}{ess\,inf}
\newcommand{\Hilb}{{\mathcal H}}
\newcommand{\Real}{{\mathbb R}}
\newcommand{\Nat}{{\mathbb N}}
\newcommand{\Ints}{{\mathbb Z}}
\newcommand{\Num}{\mathcal N}
\newcommand\dens{\mathsf{dens}\,}
\newcommand\EE{{\mathcal E}}
\newcommand{\VV}{{\cal V}}
\newcommand{\tloc}{{\mathrm{loc}}}
\newcommand{\Leb}[1]{\mathrm{Leb}\, #1}
\newcommand{\vext}{v_{\text{ext}}}
\newcommand{\Vext}{V_{\text{ext}}}
\newcommand{\Potl}{{\VV}_{\text{ext}}}
\newcommand{\vint}{v_{\text{int}}}
\newcommand{\Vint}{V_{\text{int}}}
\newcommand{\Vtot}{V_{\text{tot}}}
\newcommand{\setof}[2]{\left\{ {#1} \;|\; {#2} \right\}}
\DeclareMathOperator{\Md}{SM}
\DeclareMathOperator{\Tm}{LI}
\DeclareMathOperator{\intr}{int}
\DeclareMathOperator{\bndy}{bndy}
\DeclareMathOperator{\cl}{cl}
\DeclareMathOperator{\comps}{conn}
\newcommand{\Univ}{{\mathbb U}}
\newcommand{\cequiv}{\,{\stackrel{c}{\sim}}\,}
\newcommand{\inpr}[2]{\left\langle {#1} \middle| {#2} \right\rangle}
\newcommand{\outpr}[2]{\left| {#1} \middle\rangle\middle\langle {#2} \right|}
\newcommand{\DiracProj}[1]{\outpr{{#1}}{{#1}}}
\newcommand{\Dbracket}[3]{\left\langle{#1}\middle|{#2}\middle|{#3}\right\rangle}
\theoremstyle{plain}
\newtheorem{thm}{Theorem}[section]
\newtheorem{lem}[thm]{Lemma}
\newtheorem{cor}[thm]{Corollary}
\newtheorem{prop}[thm]{Proposition}
\newtheorem*{lem*}{Lemma}
\newtheorem*{cor*}{Corollary}
\newtheorem*{thm*}{Theorem}
\theoremstyle{definition}
\newtheorem{defn}{Definition}[section]
\theoremstyle{remark}
\theoremstyle{definition}
\begin{document}

\title{In search of the Hohenberg-Kohn theorem}
\author{Paul~E.~Lammert}
\email{lammert@psu.edu}
\affiliation{Department of Physics, 104B Davey Lab \\ 
Pennsylvania State University \\ University Park, PA 16802-6300}
\date{March 28, 2018}
\begin{abstract}
The Hohenberg-Kohn theorem, a cornerstone of electronic 
density functional theory, concerns uniqueness of external potentials
yielding given ground densities of an ${\mathcal N}$-body system.
The problem is rigorously explored in a universe of three-dimensional 
Kato-class potentials, with emphasis on 
trade-offs between conditions on the density and conditions on 
the potential sufficient to ensure uniqueness.
Sufficient conditions range from none on potentials
coupled with everywhere strict positivity of the density, to
none on the density coupled with something a little weaker than
local $3{\mathcal N}/2$-power integrability of the potential on a
connected full-measure set.
A second theme is localizability, that is, the possibility of
uniqueness over subsets of ${\mathbb R}^3$ under less stringent conditions.
\end{abstract}
\maketitle

\section{Introduction}
\label{sec:intro}

The Hohenberg-Kohn (HK) theorem~\cite{HK} is a basic principle of density 
functional theory (DFT), addressing,
for a system of $\Num$ identical particles with interaction $\vint$, 
the question of how many external one-body potentials $\vext$ have a ground 
state with given one-body density $\rho(x)$. (Such a potential is said to
{\it represent} the density $\rho$.)
The answer it gives is that, if $\rho$ has any representing potential,
it is essentially unique (almost everywhere and up to a global constant).
This is considered a cornerstone of DFT since, as Martin~\cite{Martin} puts it,
``Therefore all properties of the system are completely determined given
 only the ground-density.''
The only known proof strategy has two steps. Step 1: show that if two potentials
share a ground density, then they share a ground state. This is unproblematic.
Step 2: show that there can be no common eigenstate unless the potentials differ
by only a constant. There are a couple of heuristic approaches to this 
Step\cite{HK,Parr+Yang,Martin,Dreizler+Gross,Eschrig,Engel+Dreizler}.
However, the DFT literature does not seem to contain a thoroughly justified, 
non-vague statement of the theorem's scope. 

This paper fills that gap, not with a single ``HK theorem'', but several.
A major theme here is trade-offs between conditions on potentials and on the density
sufficient to guarantee uniqueness, as highlighted by the following Omnibus HK theorem.
Within a large universe of potentials (containing the Kato class $K_{3}$),
and assuming $\vint$ is the usual Coulomb interaction for simplicity,
$\rho$ has an essentially unique representing potential (if any), in case
(Core) $\rho$ vanishes nowhere; 
(Weak) $\rho$ vanishes on a set of measure zero and
some representing potential $\vext$ satisfies a {\it weak} condition;
or (Strong) some representing potential $\vext$ satisfies a {\it strong} condition.
The {\it weak} and {\it strong} conditions are of the form,
`$\vext$ is ``nice'' on a ``large'' set'. 
The {\it weak} condition involves the slightly esoteric notion of 
Sobolev multipliers\cite{Mazya-85-book,Mazya-09-book}. A simplified version is:
$\vext$ is weakly-$L^3$ (this allows $1/r$ Coulomb singularities) near every point
of a connected open dense set. The {\it strong} condition adds a layer to this.
Again, a simplified version is: $|\vext|^{3\Num/2}$ is locally integrable near
every point of a connected open set of full measure (zero measure complement).

It is unclear how strong a Hohenberg-Kohn theorem DFT needs.
Better put, it is unclear what r\^ole such a theorem should play.
The class of ``physical'' potentials is very limited from a mathematical perspective, 
containing at worst a discrete set of Coulomb singularities.
With $\vint$ also a Coulomb interaction, these cases are all within the
purview of even the Simple HK theorem \ref{thm:Simple HK}, which shows that
$\rho$ is almost everywhere nonzero, and full uniqueness holds.
Concerning the density with a molecular potential, much more refined 
results are available\cite{Fournais+-04,Fournais+09}
such as analyticity away from the nuclei.
However, DFT is committed to considering as well ``nonphysical'' potentials.
For instance, Lieb's theory\cite{Lieb83}, widely adopted as 
a framework, uses the very large space $L^{3/2}(\Real^3) + L^\infty(\Real^3)$.
The uncertainty of the situation justifies considering a range of 
results as done here.

Conditions in every case of the Omnibus HK theorem require nice behavior on
a large set. Cases Weak and Strong were already described that way; for
case Core, the large set is all of $\Real^3$ and the nice behavior is
nonzero $\rho$. It is natural to consider whether some localization is
possible --- to what degree does nice behavior on a set smaller than $\Real^3$
imply some kind of uniqueness on the same set? This is a second major theme.
The cases in the Omnibus theorem are in fact corollaries of just such results: 
the local Core, Weak, and Strong HK Thms. \ref{thm:local Core HK}, 
\ref{thm:local Weak HK}, and \ref{thm:local Strong HK}. 
We do not give an account of those at this point, referring instead to the
precise restatement given in Section \ref{sec:summary},
together with a sharpening of the preceding Omnibus theorem.
Each case builds on the previous ones and each requires an additional
significant result to be called in from the literature. For Core,
it is that energy eigenfunctions are guaranteed to be continuous\cite{Aizenman+Simon} 
for potentials in our chosen universe. That is the main motivation for its choice.
For Weak, it is a weak unique continuation principle (UCP)\cite{Schechter+Simon},
and for Strong, it is a strong UCP\cite{Jerison+Kenig}.

Here is a brief guide to the organization of the paper.
Section \ref{sec:setting} gives some background, carefully 
carries out Step 1 referred to in the opening paragraph using the
constrained search principle\cite{Levy79}, and identifies a suitable universe
of potentials, $\Univ$, containing the Kato class $K_3$. Its main function
is to ensure that energy eigenfunctions are continuous and their densities 
lower semicontinuous. 
The real business of studying uniqueness of $\vext$ begins after that,
but in a generalized form, namely, given a wavefunction $\psi$ with density $\rho$, 
how many external potentials will make it an energy eigenstate?
This uniqueness question takes on a nuanced form with the consideration of
uniqueness sets smaller than $\Real^3$. Section \ref{sec:uniqueness}
sets forth the appropriate definitions and discussion.
Section \ref{sec:Core} then centers on case Core of the 
Omnibus HK theorem (no conditions on potential, i.e., over and above the
Kato class restriction).
Section \ref{sec:local-inversion} derives a
{\em constructive} local inversion of the Schr\"odinger equation,
determining $\vext$ from $\psi$ where the latter is nonzero, which is
then applied to obtain the local Core and Core HK theorems in 
Section \ref{sec:Core thms}.
Section \ref{sec:simple} presents the Simple HK theorem \ref{thm:Simple HK},
a simplified version of case Strong of the Omnibus theorem 
(no conditions on the density). In fact, it is precisely the simplified 
version quoted earlier. 
Readers satisfied with that and uninterested in pursuing matters 
further may wish to take that section as an exit ramp. 
Deeper exploration commences in Section \ref{sec:Weak}. A weak Unique Continuation 
Property (UCP) of Schechter \& Simon\cite{Schechter+Simon} is reviewed in 
Section \ref{sec:weak UCP}. Its application involves a property of potentials
identified in Section \ref{sec:Sobolev-multipliers} 
as being local Sobolev multipliers in the sense of Maz'ya\cite{Mazya-09-book}.
Section \ref{sec:Weak thms} then derives the Local Weak and Weak HK theorems, 
corresponding to case Weak of the Omnibus theorem.
The development of Section \ref{sec:Strong} parallels that of 
Section \ref{sec:Weak}. A strong UCP of Jerison \& Kenig\cite{Jerison+Kenig} 
is recalled in Section \ref{sec:strong UCP} and its applicability to
our problem established. 
The relevant property of potentials here is local $3\Num/2$-power integrability.
In Section \ref{sec:Strong thms}, the UCP is applied to
prove the Local Strong and Strong HK theorems.
The designations ``Core'', ``Weak'' and ``Strong'' are partially motivated by
the r\^ole that the weak and strong UCPs play in derivation of the latter two
categories of results, but this nomenclature is more mnemonic than systematic.
Finally, Section \ref{sec:summary} pulls the results together, giving a summary
similar to the Omnibus theorem above, but in a precise style and with 
sharpened conditions implying full uniqueness over $\Real^3$.

\section{Setting}
\label{sec:setting}

This Section prepares the ground to investigate the potential uniqueness problem
by carrying out Step 1 of the Introduction in Thm. \ref{thm:constrained search},
and determining a suitable universe of potentials.
Section \ref{sec:notation} establishes basic notation and Section \ref{sec:Kato}
gives background on Kato class and local Kato class potentials.
The constrained search principle is applied in Section \ref{sec:constrained search}
to reduce the original problem to one about representing potentials for
a given wavefunction. Section \ref{sec:forms} identifies a suitable universe
$\Univ$ as consisting of potentials with positive (negative) part in
$K_3$ ($K_{3,\tloc}$). 

\subsection{Notation}
\label{sec:notation}

Working in the standard formulation of quantum mechanics in 
Hilbert space\cite{von-Neumann,Prugovecki,Amrein-book,RSI,RSII},
we deal with a system of an arbitrary but fixed number, $\Num \ge 2$, of identical 
non-relativistic particles in three dimensions. (The case $\Num = 1$ is
exceptional in certain ways, but since it is not of much interest in the DFT 
context, we do not bother to point out where and how.)
The Hilbert space of pure states, $\Hilb$, is equipped with the inner product
\begin{equation}
\langle \psi | \phi \rangle = \sum_{\underline{\sigma}} \int_{\Real^{3\Num}} 
{\psi}({\underline{\sigma}};\underline{x})^*  
{\phi}({\underline{\sigma}};\underline{x}) \, 
d\underline{x}
\end{equation}
the sum being over all spin assignments 
$\underline{\sigma} \defeq (\sigma_1,\sigma_2,\ldots,\sigma_\Num)$
to the $\Num$ particles. An underlined symbol, as in this equation,
indicates a collection of things indexed from $1$ to $\Num$.
For physical applications, $\Hilb$ contains only states of appropriate 
exchange symmetry (fermionic or bosonic).
However, it turns out that neither exchange symmetry nor spin really 
plays any significant r\^ole in our considerations.
The Hamiltonian is built from one-body and two-body pieces:
\begin{equation}
H = T + \Vint + \Vext.
\label{eq:Hamiltonian}
\end{equation}
$T$ denotes kinetic energy, represented by the $3\Num$-dimensional Laplacian $-\nabla^2$. (Effectively, we take $\hbar^2/2m = 1$.)
For notational simplicity, we use the following notations for
potential energy of interaction among all the particles,
the external potential, and the total potential: 
\begin{equation}
\Vint(x_1,\ldots,x_\Num) = \sum_{1 \le i < j \le \Num} v_{\text{int}}(x_i-x_j),
\quad
\Vext (x_1,\ldots,x_N) = \sum_{i=1}^\Num \vext(x_i),
\quad\text{and}\quad
\Vtot = \Vint + \Vext.
\end{equation}
Associated to the wavefunction $\psi \in \Hilb_\Num$ is the (one-particle) density 
\begin{equation}
(\dens \psi)(x) = 
\sum_{1\le i\le \Num} 
\sum_{\underline{\sigma}} 
\int_{\Real^{3(\Num-1)}} |\psi({\underline{\sigma}};x_1,\ldots,x_{i-1},x,x_{i+1},\ldots,x_N)|^2 \, 
dx_1 \cdots dx_{i-1} dx_{i+1} \cdots dx_N.
\label{eq:density-defn}
\end{equation}
We use $\rho$ generically to denote the density associated with whatever state
is currently under discussion. 
Like wavefunctions, $\rho$ is not really a point function, being defined only up to 
modification on a measure zero set. This is not a pedantic point since we will
need a meaningful distinction between $\rho$ vanishing nowhere and vanishing on
a set of measure zero, and it will drive the selection of universe of potentials.
Recall in connection with these measure-theoretic issues that 
to say that $P$ holds almost everywhere in $X$ is the same as saying that
the subset of $X$ on which $P$ fails has zero measure.
Equivalently, the set on which $P$ holds is `of full measure' in $X$.
If $X$ is not explicitly mentioned in a statement of that form, it is
implied to be $\Real^n$, $n$ depending on context.
In the following, we frequently use the standard abbreviation
`a.e.' for `almost everywhere'. 

The dependence of $\dens\psi$ on $\psi$ is through a linear dependence on
the corresponding projection $|\psi\rangle\langle\psi |$. 
Therefore $\dens$ has an immediate extension by linearity to a mixed state
$\gamma = \sum_i \lambda_i \DiracProj{\psi_i}$ as
\begin{equation}
\dens \gamma = \sum_i \lambda_i \, \dens\psi_i.
\end{equation}

\subsection{Kato-class potentials}
\label{sec:Kato}

For a function $V$ on $\Real^n$, $n \ge 3$, define
\begin{equation}
M_V(x,\delta) = \int_{|x-y| \le \delta} \frac{|V(y)|\;\;\;}{|x-y|^{n - 2}}\, d^ny.
\label{eq:MV}
\end{equation}
If $M_V(x,\delta) \to 0$ as $\delta \to 0$, uniformly in $|x| < R$ 
for each $R<\infty$,
then $V$ is said to be in the {\it local Kato class} $K_{n,\tloc}$. 
If convergence to zero is uniform over all $|x|$, then $V$ is in
the {\it Kato class} $K_n$. Thus, $K_n \subset K_{n,\tloc}$. The condition
to be satisfied is fairly called ``local'' in either case, but for
$K_n$, it must be satisfied uniformly.
The Kato classes $K_{3\Num}$ will be important to this investigation,
but the following property allows us to reduce considerations to $K_3$.
If $V^\prime \in K_{n}$, $V^{\prime\prime} \in K_{m}$, and
${V}(x_1,\ldots,x_{n+m}) = 
V^\prime(x_1,\ldots,x_n) + V^{\prime\prime}(x_{n+1},\ldots,x_{n+m})$, then
${V} \in K_{n+m}$. A parallel relation holds for the local Kato classes.
Thus, for example, $\vext \in K_{3,\tloc}$ implies $\Vext \in K_{3\Num,\tloc}$.

To get a better feeling for the Kato classes, we compare
them to the Lebesgue and local Lebesgue scales. Recall that 
$L^p_\tloc(\Real^n)$ consists of functions $p$-th power 
integrable over any bounded subset of $\Real^n$.
Application of H\"older's inequality to $M_V(x,\delta)$ shows that
\begin{equation}
L^{p}_{\tloc}(\Real^3) \subseteq K_{3,\tloc} \,\;\text{and}\;\,
L^{p}(\Real^3) \subseteq K_{3}, 
\quad \text{for}\; p > \frac{3}{2}.
\label{eq:Kato vs. Lebesgue}
\end{equation}
In the other direction, by direct inspection of the integral in (\ref{eq:MV}),
\begin{equation}
K_{3,\tloc} \subseteq L^1_{\tloc}(\Real^3). 
\end{equation}
However, $K_3$ is not contained in $L^1(\Real^3)$. 
The condition to be in $K_3$ is local in a sense similar to that of the
local Lebesgue spaces. Thus, for example, singular
periodic potentials which may be of interest for solid-state physics
can be in $K_3$, whereas they are never in $L^1(\Real^3)$.
A potential which behaves as $V(x) \propto |x|^{-2} (\ln |x|)^{-\alpha}$ 
with $2/3 < \alpha \le 1$ for $x$ near $0$ and zero elsewhere
belongs to $L^{3/2}(\Real^3)$, but does not belong to $K_{3,\tloc}$, so
\begin{equation}
L^{3/2}(\Real^3) \nsubseteq K_{3,\tloc}.
\end{equation}
This observation is important insofar as it shows that neither 
the Lieb class 
$L^{3/2}(\Real^3) + L^\infty(\Real^3)$ is contained in $K_3$ nor vice versa.
But, as shown by (\ref{eq:Kato vs. Lebesgue}), $K_3$ nearly contains
the Lieb class in some sense.
For further information on Kato classes, see
Ref.~\onlinecite{Aizenman+Simon}, Ch. 2 of Ref.~\onlinecite{CFKS}, 
or \S A2 of Ref.~\onlinecite{Simon-82}.

\subsection{Constrained search principle}
\label{sec:constrained search}

The total energy of a normalized state $\psi$ in presence of the external 
one-body potential $\vext$ is
\begin{equation}
{\EE}_{\vext}[\psi] 
= {\EE}_0[\psi] + \langle \vext,\dens \psi \rangle,
\label{eq:psi-energy}
\end{equation}
where
\begin{equation}
{\EE}_0[\psi] = \langle \psi|T|\psi\rangle + \langle \psi|\Vint|\psi\rangle,
\;\; \text{and}\;\;
\langle \vext,\dens \psi \rangle = \int \vext\, (\dens \psi)\, dx.
\label{eq:energy forms}
\end{equation}
This extends from pure states $|\psi \rangle\langle \psi|$
to mixed states just as $\dens$ does.
A ground state is simply a (normalized) state minimizing the energy 
form (\ref{eq:psi-energy}).
The next theorem, embodying Step 1 of the Introduction,
is a form of the constrained search principle first emphasized by Levy\cite{Levy79}.
The idea requires no operator theoretic, or even Hilbert space, considerations; 
${\EE}_0$ and $\langle \vext,\, \cdot\, \rangle$ need only be functions 
into $\Real\cup\{+\infty\}$.
\begin{thm}
\label{thm:constrained search}
If $\rho$ is a ground density for both $\vext$ and $\vext^\prime$,
and $\gamma$ is a ground state for $\vext$ with $\dens\gamma = \rho$, 
then $\gamma$, along with every vector in its range space,
is a ground state also for $\vext^\prime$. 
\end{thm}
\begin{proof}
Among $\setof{\gamma}{\dens\gamma = \rho}$, all and only those states
which minimize $\EE_0$ are ground states for $\vext$, because
$\EE_{\vext}[\gamma] = \EE_0[\gamma] + \langle \vext,\rho\rangle$ 
is the same for all of them, and equal to the ground energy by assumption.
\end{proof}

\subsection{Quadratic forms, operators, and the universe of potentials}
\label{sec:forms}

Since we intend to deal with highly singular potentials, the issues of the domain 
of the energy function (\ref{eq:psi-energy}) and existence of a corresponding
self-adjoint Hamiltonian operator are nontrivial.
The kinetic energy functional is written above as
`$\Dbracket{\psi}{T}{\psi}$' instead of `$\inpr{\psi}{T\psi}$' because
it is intended to be understood, initially, as a quadratic form --- the
restriction to the diagonal of the sesquilinear form
\begin{equation}
\langle \phi | T | \psi \rangle \defeq 
\int \nabla\phi^* \cdot \nabla\psi \, d\underline{x}
= \int |\underline{p}|^2 
\tilde\phi(\underline{p})^* \tilde\psi(\underline{p}) \, d\underline{p},
\label{eq:T fnal}
\end{equation}
where tilde denotes Fourier transform.
Throughout this subsection, both spin and exchange statistics are ignored;
they contribute notational clutter but nothing substantial to the 
present considerations.
The point is that the sesquilinear form is well-defined for $\phi$ and $\psi$ in the
form domain $D[T]$ which is bigger than the operator domain of the
kinetic energy. Indeed, $D[T]$ is the Sobolev space
$H^1(\Real^{3\Num})$, a Hilbert space under the inner product
$\inpr{\phi}{\psi}_{H^1} = \Dbracket{\phi}{T+1}{\psi}$.
Similarly for a potential (e.g., $\Vext$ or $\Vint$),
\begin{equation}
\langle \phi | V | \psi \rangle \defeq 
\int \phi^* \psi\, V \, d\underline{x}.
\end{equation}
Suppose now that $C_c^\infty(\Real^{3\Num})$ (compactly supported, 
infinitely differentiable functions) is in the domain of the
real form 
${\cal E}_{\vext}[\psi] = \Dbracket{\psi}{T+\Vint+\Vext}{\psi}$.
Then, if ${\cal E}_{\vext}[\psi] = E_0$ is a minimum over normalized states, 
a simple integration by parts suffices to see that
\begin{equation}
0 = \inpr{(-\nabla^2 + \Vtot - E_0)\eta}{\psi} 
= \int\psi (-\nabla^2 + \Vtot - E_0)\eta^*\, d{\underline{x}},
\label{eq:Schrodinger 0}
\end{equation}
for every $\eta \in C_c^\infty(\Real^{3\Num})$.
Without loss of generality, we may take $E_0 = 0$ (with the
pleasant side effect of rendering normalization irrelevant).
Then, the previous display says precisely that $\psi$ is a solution 
in distribution sense of the Schr\"odinger equation
\begin{equation}
\label{eq:Schrodinger}
(-\nabla^2 + \Vint + \Vext) \psi = 0.
\end{equation}
To ensure that the form domain contains $C_c^\infty(\Real^{3\Num})$,
it suffices that $\vext$ and $\vint$ are locally integrable.
By a result of Aizenman and Simon
[Thm. 1.5 of Ref.~\onlinecite{Aizenman+Simon}], {\em all} distributional 
solutions of (\ref{eq:Schrodinger}) are actually continuous 
functions when $\vext$ and $\vint$ are in the local Kato class $K_{3,\tloc}$.
We will want such continuity for technical reasons, 
although one might also have reasons of a more philosophical nature.
At any rate, this motivates working in the universe
\begin{equation}
\Univ = K_{3,\tloc} \quad \text{(preliminary)}.
\label{eq:universe preliminary}
\end{equation}

On the surface, at least, the main part of this paper concerns 
uniqueness of $\vext$ in Eq. (\ref{eq:Schrodinger})
as it stands, that is without regard even to whether there are other
eigenfunctions for lower energy. However,
to be quantum mechanically respectable, our energy form should be not
only lower bounded, but also correspond in a natural way to a lower 
bounded self-adjoint operator. To be assured of that will require
additional restriction.
The idea is that if the quadratic form 
$\Dbracket{\,\cdot\,}{T+\Vint+\Vext}{\,\cdot\,}$ has
domain $D$ dense in $L^2(\Real^{3\Num})$, then 
a corresponding operator $H$ is unambiguously defined on a subspace of $D$ via
\begin{equation}
H\psi = \eta \in L^2(\Real^{3\Num})
\quad\Leftrightarrow\quad
\forall \phi\in D, \, \Dbracket{\phi}{T+\Vint+\Vext}{\psi} = \inpr{\phi}{\eta}.
\end{equation}
According to the standard 
theory,\cite{Kato,Simon-forms,RSI,RSII,deOliveira,Schmudgen}
lower bounded self-adjoint operators on a Hilbert space $\Hilb$
are in one-to-one correspondence with real lower bounded forms which 
are {\em closed} on a dense domain. 
What closedness means for such a form $\Dbracket{\psi}{A}{\psi}$ with 
domain $D[A]\subset \Hilb$ can be expressed in either of these two ways:
(i) for some real number $m$, $\Dbracket{\psi}{A+m}{\phi}$ is an inner
product making $D[A]$ a Hilbert space, (ii)
extending $\Dbracket{\psi}{A}{\psi}$ to $\Hilb$ by
declaring it equal to $+\infty$ off $D[A]$ produces a 
lower semicontinuous function on $\Hilb$.
(For this second, less common characterization, see \S 10.1 of Ref. \onlinecite{Schmudgen},
\S 9.3 of Ref. \onlinecite{deOliveira} or Ref. \onlinecite{Simon77}.)
It is easy to see that the kinetic energy functional (\ref{eq:T fnal})
is a closed, lower bounded, form on $H^1(\Real^{3\Num})$. 
To maintain those properties under addition of potentials requires 
differing considerations for the positive (repulsive) and negative (attractive)
parts. For the positive part, characterization (ii) shows that only local integrability
is required (to keep $C_c^\infty(\Real^{3\Num})$ in the domain). 
For the negative part, 
Kato shows (\S VI.4.6 of Ref. \onlinecite{Kato}), using characterization (i), 
that it is sufficient for it to be in the Kato class $K_3$. (There are other
sufficient conditions.)

These further considerations motivate reducing the universe of potentials to
\begin{equation}
\Univ = K_{3,\tloc}^+ - K_{3}^+.
\end{equation}
That is, the positive parts of potentials are still in $K_{3,\tloc}$ while 
negative parts are restricted to $K_{3}$. This restriction guarantees a good
quantum mechanical interpretation of the results of Sections 3 -- 6, but
those results do not themselves require the restriction.

\subsection{Spin components, 
exchange symmetry, and lower semicontinuity of the density} 
\label{sec:spin components}

To reinstate spin and exchange symmetry in the considerations of Section
\ref{sec:forms} merely requires adding spin components and restricting to
the correct symmetry Hilbert subspace, $\Hilb$.
But, if $\psi \in \Hilb$ satisfies (\ref{eq:Schrodinger 0}) 
for $\eta \in \Hilb$, then the symmetry restriction on $\eta$ is actually
dispensible. The conclusion is that each spin component of $\psi$ satisfies
the Schr\"odinger equation (\ref{eq:Schrodinger}) in its original, spinless, sense.
Thus, in the following sections, we work directly only with single-component
wavefunctions satisfying (\ref{eq:Schrodinger}) with no
exchange symmetry. It will be seen, as discussed in the next Section,
that information from distinct components can be easily patched together all the way to
mixed states. For a mixed ground state, each spin component of each vector
in its range space satisfies (\ref{eq:Schrodinger}).

Once the symmetry restriction is lifted, it becomes convenient to
use the partial densities
\begin{equation}
\rho_{n}(x) = 
\int_{\Real^{3(\Num-1)}} |\psi(x_1,\ldots,x_{n-1},x,x_{n+1},\ldots,x_N)|^2 \, 
dx_1 \cdots dx_{i-1} dx_{i+1} \cdots dx_N,
\label{eq:density-n-sigma}
\end{equation}
for $n=1,\ldots,\Num$, so that $\rho(x) = \sum_{n=1}^\Num \rho_n(x)$.
The choice of universe $\Univ$ was motivated by the fact that it 
makes an eigenfunction $\psi$ continuous. 
What that implies for the density $\rho$ or partial densities $\rho_n$
is lower semicontinuity.
Recall that a function $f$ is lower semicontinuous if $\setof{x}{f(x) \le c}$
is closed for every $c\in\Real$.
That continuity of $\psi$ implies lower semicontinuity of $\rho$ is seen
as follows.
With $\Lambda_R(x_i)$ a continuous cutoff function equal to 
1 for $|x_i| \le R$ and dropping monotonically to zero at $|x_i| = R+1$, 
$|\psi_R|^2 \defeq \prod_i \Lambda_R(x_i) |\psi(\underline{x})|^2$ is nonnegative,
continuous, and compactly supported (hence {\em uniformly} continuous).
Substituting into the formula (\ref{eq:density-defn}) yields a continuous
$\rho_R$. As $R\to\infty$, $\rho_R$ increases to $\dens\psi$ by the monotone 
convergence theorem. An increasing limit of continuous functions is 
lower semicontinuous, hence $\dens\psi$ is lower semicontinuous.
This argument uses the continuous version of $\psi$ to compute $\rho$.
We can also recover the lower semicontinuous version of $\rho$ directly from
any version as the function 
$x \mapsto \lim_{\delta \downarrow 0} \essinf_{B_\delta(x)} \, \rho$.
The importance of all this to our considerations is that lower semicontinuity of $\rho$
makes the notion of connected components of $\{\rho > 0\}$ well-defined. 

\section{Uniqueness}
\label{sec:uniqueness}

Henceforth, $\vint$ and $\vext$ are assumed to be in $\Univ$, but
they have slightly different status.
The interaction potential $\vint \in \Univ$, and the state,
whether a pure state ($\psi$) or a mixed state ($\gamma$), 
are considered as given, while $\vext$ is essentially a variable 
ranging over the set $\Potl \subseteq \Univ$ of potentials for which 
the Schr\"odinger equation (\ref{eq:Schrodinger}) is satisfied. 
Thus, an assertion of
the form, ``if $\vext$ has property $P$...'' is to be roughly understood
as existentially quantified and synonymous with ``if any $\vext\in\Potl$ has
property $P$...''. 
The set $\Potl$ is determined by the state and the interaction potential
via the Schr\"odinger equation, and the central concern is in what ways
its members can differ from one another, the most desirable case being
that $\Potl$ has a unique member (if any).
The following definition, formalizing modes of partial uniqueness,
will be of central importance, and it is formulated with an eye particularly
on the possibility that there may be disjoint sets $U$ and $U^\prime$ 
such that any pair of potentials in $\Potl$ differ by constants 
almost everywhere over each of $U$ and $U^\prime$, but that 
there are different choices for those constants.
\begin{defn}[uniqueness sets and points, $c$-equivalence]
\label{def:uniqueness}
An open set $U\subseteq \Real^3$ is an {\it uniqueness set} 
of $\Potl$ if, for each pair $\vext,\vext^\prime \in \Potl$,
$\vext - \vext^\prime$ is constant a.e. in $U$. 
The point $x$ is a {\it uniqueness point} if some open ball containing 
$x$ is a uniqueness set.
Two uniqueness sets $U$ and $U^\prime$ are $c$-{\it equivalent}, 
written $U\cequiv U^\prime$, if $U\cup U^\prime$ is a uniqueness set. 
Similarly, for two uniqueness points $x$ and $y$, $x\cequiv y$ means 
that $x$ and $y$ are in a common uniqueness set. 
\end{defn}
Often, when invoking this this definition, we will not bother to mention
$\Potl$, it being implicit in the discussion and determined through the
state and $\vint$ (the latter of which will also usually be implicit).

One might say that the subject of this paper is $c$-equivalence classes.
In that way of describing things, identifiable subrelations of $\cequiv$ will 
be very important. 
Given some such, the equivalence classes of
the generated equivalence relation are subsets of $c$-equivalence classes.
An important example of this simple principle is,
``have nonempty intersection'' is a subrelation 
of $\cequiv$, that is, if $U^{\prime}$ and $U^{\prime\prime}$ are uniqueness sets, 
then $U^{\prime}\cap U^{\prime\prime} \neq \varnothing$ implies
$U^{\prime}\cequiv U^{\prime\prime}$.
The reason is that, if $\vext^\prime - \vext$ is equal to constant $c^\prime$ 
or $c^{\prime\prime}$ almost everywhere on the open set $U^\prime$, respectively 
$U^{\prime\prime}$, then $\vext^\prime - \vext$ is equal to both 
$c^\prime$ and $c^{\prime\prime}$ almost everywhere on the nonempty open set
$U^{\prime}\cap U^{\prime\prime}$, implying $c^\prime = c^{\prime\prime}$. 
Here it was important that only {\it open} sets are eligible to be uniqueness
sets and if the intersection of two open sets is nonempty, it is open, hence of
nonzero measure.
The subrelation of $\cequiv$ just identified implies that the connected components 
of the set of all uniqueness points
are uniqueness sets; indeed, they are the equivalence classes generated by the
intersection subrelation.
This indicates that connectedness plays an important r\^ole in this paper, though 
it is needed only for open sets. Recall that an open set is connected precisely 
when it is not the union of two disjoint nonempty open sets. 
For a set $\Omega \subseteq \Real^n$, a connected component is a maximal connected subset,
and the notation
\begin{equation}
\comps \Omega = \text{set of connected components of }\Omega  
\end{equation}
will be convenient from time to time.
Note that, if $\Omega$ is open, every member of $\comps \Omega$ is also open.

Similarly, given a wavefunction satisfying the Schr\"odinger equation 
(\ref{eq:Schrodinger}), we can patch together the $c$-equivalence conclusions
resulting from consideration of individual spin components.
This leads to a style of working such that,
within a proof, $\rho$ implicitly corresponds to a
generic single-component wavefunction $\psi$ with no exchange symmetry
while in the statement of the same theorem, $\rho$ can refer to anything between
that level and a spin-full mixed state. Once the convention is understood, there
is little chance for confusion. We will generally refer to this procedure
simply as {\it patching}.

To see how Definition \ref{def:uniqueness} connects to our original problem, 
consider this template:
whenever $\dens\gamma$ has property $P$ and there is a $\vext\in\Potl$ with 
property $Q$, then $U$ is a uniqueness set. If this has been established,
Thm. \ref{thm:constrained search} immediately licenses the conclusion that, 
whenever $\rho$ has property $P$ and one representing potential has property 
$Q$, then every pair of representing potentials differ merely by a constant 
a.e. on $U$.
A conclusion of the traditional Hohenberg-Kohn form would say that $\Real^3$ is
a uniqueness set. In that case, in fact, $\Potl$ reduces to a singleton because
(\ref{eq:Schrodinger}) eliminates the freedom of a global constant shift by
specifying that the eigenvalue is zero.
Actually, the traditional Hohenberg-Kohn form would not only specify
$U = \Real^3$, but would omit condition $P$ in the above template.
However, there is nothing about DFT which compels such a narrow attitude.
It seems perfectly reasonable to ask what kinds of trade-offs can be 
made to weaken $Q$ by strengthening $P$. This is one theme
of the investigation. A second is the consideration of uniqueness sets which are
not all of $\Real^3$. One motivation for considering such results is to
find out to what degree ``good'' behavior of $\rho$ and/or $\vext$ can imply
uniqueness {\em locally}. 

\section{Core} 
\label{sec:Core}

This Section presents the easiest results in the direction of
what was called Step 2 in the Introduction.
Section \ref{sec:local-inversion} gives a construction of $\vext$
satisfying the Schr\"odinger equation (\ref{eq:Schrodinger}) locally 
in a set where $\psi$ is almost everywhere nonzero, up to an overall constant. 
The construction is modeled on the heuristic ``divide by $\psi$'' strategy
referenced in the Introduction.
It is used to show (Thm. \ref{thm:local Core HK}) that each connected 
component of $\{\rho > 0\}$ is a uniqueness set.
With the condition that $\rho > 0$ everywhere, our first HK theorem follows.
Finally, Section \ref{sec:from one to another} opens the discussion of
how to join connected components into a single uniqueness set and provides
a technical tool which will be used for that purpose later.
The adjective ``Core'' is meant to highlight that the
results here are fundamental and will be built upon in the following Sections.

\subsection{Local inversion of the stationary Schr\"odinger equation}
\label{sec:local-inversion}

Suppose that $\psi$ is a solution of the Schr\"odinger equation (\ref{eq:Schrodinger}),
nonzero almost everywhere on $B_r(y_1) \times B_r(y_2) \times \cdots \times B_r(y_\Num)$,
where $B_r(x)$ denotes the open ball of radius $r$ centered at $x$.
We locally invert the equation to obtain $\vext$ on $B_r(y_1)$, 
up to a constant.
Rearrange the equation to [it also works to hold division by $\psi$ in abeyance until 
the last step] 
\begin{equation}
\vext(x_1) = \frac{T\psi}{\psi}(\underline{x}) 
- \sum_{1 \le  i < j \le \Num} \vint(x_i-x_j) 
- \sum_{2 \le i \le \Num} \vext(x_i).
\label{eq:solved-for-v}
\end{equation}
The idea is simply to freeze the last $\Num-1$ coordinates to
$(y_2,\ldots,y_\Num)$ and vary $x_1$ in $B_r({y_1})$. 
Since the first two terms on the right-hand side of (\ref{eq:solved-for-v}) 
are known and the last is constant, $\vext(x_1)$ can be extracted.
The difficulty is that the equation holds only almost everywhere,
while the slice $(x_2,\ldots,x_\Num) = (y_2,\ldots,y_\Num)$ has measure zero. 
To cope with that, smear everything with the aid of
a function $h \in C^\infty_c(\Real^3)^+$ --- 
smooth, non-negative, supported in the unit ball $B_1(\Real^3)$, and having integral $1$.
The scaled version
\begin{equation}
h_\epsilon(x) = \epsilon^{3} h\left(\frac{x}{\epsilon}\right)
\end{equation}
is supported in $B_\epsilon(\Real^3)$ and also has integral $1$.
Convolution of (\ref{eq:solved-for-v}) with $h_\epsilon^\Num$ yields
\begin{equation}
\vext^\epsilon(x_1) = 
\left[ h_\epsilon^\Num * \frac{T\psi}{\psi} \right](x_1,y_2,\ldots,y_\Num)
- \sum_{2 \le j \le \Num} \vint^\epsilon(x_1-y_j) 
- \sum_{2\le i < j \le \Num} \vint^\epsilon(y_i-y_j) 
- \sum_{2 \le i \le \Num} \vext^\epsilon(y_i) 
\label{eq:solved-for-v-and-smeared}
\end{equation}
where
\begin{equation}
  \vext^\epsilon(x) = (h_\epsilon * \vext)(x)
 = \int_{\Real^3} h_\epsilon(y) \vext(x-y)\, dy, 
\quad
  \vint^\epsilon(x) 
 = \int_{\Real^6} h_\epsilon(y) h_\epsilon(y^\prime) \vint(x-y+y^\prime) \, dy\, dy^\prime, 
\end{equation}
Eq. \ref{eq:solved-for-v-and-smeared} is well-defined for 
$|x_1 - y_1| < r-\epsilon$.
Every term is a smooth function of $x_1 \in B_{r-\epsilon}(y_1)$, 
the first two terms on the right-hand side are known and the last two are
constants, so $\vext^\epsilon$ is determined over $B_{r-\epsilon}(y_1)$, up to a constant.
As $\epsilon \to 0$, $\vext^\epsilon$ converges to $\vext$ in $L^1(B_{r-\delta}(y_1))$ 
for any fixed $\delta$. This determines $\vext$ over $B_{r}(y_1)$, up to a constant.

\subsection{Core HK theorems}
\label{sec:Core thms}

\begin{thm}[local Core HK]
\label{thm:local Core HK}
Each connected component of $\{\rho > 0\}$ is a uniqueness set.
\end{thm}
\begin{proof}
Suppose $\rho_1(x_1) > 0$. Since $\psi$ is continuous,
there must be some $r > 0$ and $x_2,\ldots,x_\Num$ 
such that $\psi \neq 0$ on 
$B_r(x_1)\times B_r(x_{2})\times \cdots\times B_r(x_{\Num})$.
Using the local inversion procedure of Section \ref{sec:local-inversion}, 
$\vext$ can be determined on $B_r(x_1)$ up to a constant. 
Now apply {\it patching}.
\end{proof}

Obtaining a stronger conclusion is the major preoccupation of the rest of
the paper. There are two ways to do that: hypotheses on the potentials $\vext$
and $\vint$ or hypotheses on $\rho$. One could use hypotheses on $\psi$, but
for DFT purposes that is inappropriate. 
In that context, we can suppose information about $\rho$ is available, 
but not about $\psi$, except what is implied by the density.
So, we ask what conditions on $\rho$ would close the
gap between the Core HK thm. \ref{thm:local Core HK} and the traditional statement.
The answer is immediately forthcoming:
if $\rho > 0$ everywhere, then $\Real^3$ itself is the unique
connected component of $\{\rho > 0\}$.
\begin{thm}[Core HK]
\label{thm:Core HK}
If $\rho > 0$ everywhere, then $\Real^3$ is a uniqueness set.  
\end{thm}

\subsection{From one connected component to another}
\label{sec:from one to another}

With the notations $\cl X$, $\intr X$ and $\bndy X$ for
the closure, interior and boundary, respectively, of a set $X$,
$\Real^3$ can be decomposed in these alternate ways as the 
union of two (jointly dense) open sets and their common boundary:
\begin{align}
\Real^3 
& = 
\intr \{\rho = 0\} 
\cup
\{\rho > 0\}
\cup 
\bndy \{\rho>0\} 
\nonumber \\
& = 
\intr \{\rho=0\} 
\cup 
\intr \cl \{\rho > 0\} 
\cup 
\bndy \intr \{\rho=0\}.
\end{align}
The Core HK thm. \ref{thm:local Core HK} shows that $\{\rho > 0\}$ 
consists of uniqueness points.
On the other hand, if $x \not\in \intr\cl \{\rho > 0\}$,
then $x \in \cl \intr \{\rho = 0\}$. Every neighborhood of $x$ contains
an open subset of $\{\rho = 0\}$, where $\vext$ is entirely unconstrained,
so $x$ cannot possibly be a uniqueness point.
Summing up, 
\begin{equation}
\{\rho>0\} \subseteq \{\text{uniqueness points}\} 
\subseteq \intr \cl \{\rho>0\}.
\label{eq:inclusions}
\end{equation}
Thus, for there to be any hope that $\Real^3$ is a uniqueness set,
$\{\rho > 0\}$ must be dense in $\Real^3$. And when the latter has
multiple connected components, a way to show their $c$-equivalence
is needed. The following lemma points a way toward that.
\begin{lem}
\label{lem:single particle transfer}
Let $U_1,U_1^\prime,U_2,\ldots,U_\Num \in \comps \{\rho > 0\}$.
If both $U_1\times U_2\times \cdots\times U_\Num$
and $U_1^\prime\times U_2\times \cdots\times U_\Num$ intersect $\{\psi\neq 0\}$,
then $U_1 \cequiv U_1^\prime$.
\end{lem}
\begin{proof}
Fix $\vext, \vext^\prime \in \Potl$.
According to Thm. \ref{thm:local Core HK}, 
$\vext^\prime - \vext$ is almost everywhere equal to some constant $c(U)$
over $U \in \comps \{\rho>0\}$.
Taking the difference of the Schr\"odinger equations corresponding
to $\vext^\prime$ and $\vext$ somewhere in 
$(U_1\times U_2\times \cdots\times U_\Num)\cap \{\psi\neq 0\}$
yields $c(U_1) + \sum_{n=2}^\Num c({U_n}) = 0$. Similarly,
$c(U_1^\prime) + \sum_{n=2}^\Num c({U_n}) = 0$. Hence, $c(U_1) = c(U_1^\prime)$.
\end{proof}
Effective use of this lemma requires finding appropriate conditions
on $\vint$ and $\vext$.

\section{Simple}
\label{sec:simple}

\subsection{Unique continuation properties}
\label{sec:ucp preview}

To progress beyond the Core HK theorems, 
we call on a powerful class of results known as unique continuation
properties (UCPs).
Suppose $\psi$ is a solution to the Schr\"odinger equation 
(\ref{eq:Schrodinger}) on the connected
open set $\Omega \subseteq \Real^{3\Num}$.
For our purposes, a UCP in that setting is a theorem of the following form:
If $\Vtot$ is in some class ${\cal C}$, and 
$\psi$ does not vanish identically over $\Omega$, then it cannot 
``vanish nontrivially''.
For a {\it weak} UCP, ``vanish nontrivially'' means ``vanish on an open set''.
For a {\it measure} UCP (there is really no standard terminology for this
case), ``vanish nontrivially'' means ``vanish on a set of nonzero measure''.
A third sense of ``vanish nontrivially'' will be met in Section \ref{sec:strong UCP}.
If we aim for a result like the usual Hohenberg-Kohn theorem, a measure
UCP is appropriate. The conclusion of a measure UCP is stronger than that of
a weak UCP, hence will require stronger hypotheses (i.e., a smaller class
${\cal C}$).
The major shortcoming for our purposes is that, while a UCP can guarantee
that $\{\psi \neq 0\}$ is very large, it will not guarantee that it is
connected.

In Thm. \ref{thm:Simple HK} just below, we use a measure UCP 
(Cor. \ref{cor:measure UCP})
for which ${\cal C}$ is the class of functions having locally integrable $3\Num/2$ power.
That is, $\Vtot \in {\cal C}$ if $|\Vtot|^{3\Num/2}$ is integrable over some 
neighborhood of every point in $\Omega$, a situation denoted 
$\Vtot \in L_{\tloc}^{3\Num/2}(\Omega)$.
This measure UCP follows from a result of Jerison \& Kenig~\cite{Jerison+Kenig}, 
the conclusion of which is stated as Thm. \ref{thm:Jerison+Kenig} below.
Requiring $\Vtot$ to be in $L_{\tloc}^{3\Num/2}(\Omega)$ allows only extremely weak 
singularities {\em within} $\Omega$ if $\Num$ is large.
However, we do not need $\Omega$ to be all of $\Real^3$, but only of
full measure.

\subsection{Simple HK theorem}

\begin{thm}[Simple HK]
\label{thm:Simple HK}
Suppose $\vint \in L_{\tloc}^{3\Num/2}(\Real^3\setminus\{0\})$ and
$\vext \in L_{\tloc}^{3\Num/2}(\Omega)$ for some open connected set $\Omega$ of full
measure in $\Real^3$ (i.e., $\Real^3\setminus \Omega$ has measure zero). 
Then, $\Real^3$ is a uniqueness set.
\end{thm}
\begin{proof}
Clearly, $\Omega^\Num \defeq \Omega\times \Omega\times \cdots \times \Omega$ 
is a connected open set of full measure in $\Real^{3\Num}$, 
and $\Vext \in L_{\tloc}^{3\Num/2}(\Omega^\Num)$.
Similarly, $\Vint$ is locally $3\Num/2$-integrable
away from points corresponding to coincidence of two or more particles. 
The set $\{x_i = x_j\}$ of coincidence of particles $i$ and $j$ is a linear subspace of 
codimension 3, and the result of removing it from a connected open set of full 
measure still has all three of those attributes:
Openness and being of full measure are clear. Connectedness follows since the
coincidence set has codimension greater than 1.
Therefore, $\Vtot$ is locally $3\Num/2$-integrable on some open, connected,
full-measure set $U \subseteq \Omega^\Num \subseteq \Real^{3\Num}$.
This sets the stage for the application (to $U$) of the measure UCP 
previewed in Section \ref{sec:ucp preview}.
Since $\psi$ cannot be identically zero, $\{\psi\neq 0\}$ is a full-measure 
subset of $U$, hence of $\Real^{3\Num}$, and $\{\rho > 0\}$ a full measure 
subset of $\Real^3$.
With these conclusions, the proof is now completed by an easy application of 
Lemma \ref{lem:single particle transfer}: $\{\psi\neq 0\}$ must
intersect any open set, hence $\{\rho > 0\}$ is a uniqueness set, and
then $\Real^3$ is also. 
\end{proof}

Much further along in this paper, the Strong HK Thm. \ref{cor:Strong HK}
will provide a strengthening of the Simple HK theorem above 
(same conclusion under strictly weaker hypotheses). 
For instance, the full-measure sets on which $\vext$ and $\vint$ are 
locally $3\Num/2$-integrable are not required to be connected. 
However, that result should not be viewed as the singular culmination 
toward which the development drives. 
Rather, the goal is to understand the trade-offs between conditions 
on potentials and on densities which will secure uniqueness, and
the degree to which those considerations can be localized.

\section{Weak}
\label{sec:Weak}

In order to improve on the local Core HK Thm. \ref{thm:local Core HK}, 
uniqueness must be extended from one connected component of 
$\{\rho > 0\}$ to another, and
Lemma \ref{lem:single particle transfer} offers a way to do that.
In this Section, conditions on $\vint$ and $\vext$ are found
which will allow the application of that lemma.
If they are both ``nice enough'', then we will be able to deduce
$x\cequiv y$ for two points of $\{\rho > 0\}$ if they are in the
same connected component of the ``nice'' set $\Md(\vext)$.
The key to the development is a weak UCP of Schechter \& Simon
discussed in Section \ref{sec:weak UCP}. 
Section \ref{sec:Sobolev-multipliers} identifies the 
property for applicability of that UCP to be that
the potentials are locally Sobolev multipliers, mapping $H_0^1$ into $L^2$.
For effective application, leading to the local Weak HK Thm. \ref{thm:local Weak HK}
and Weak HK Thm. \ref{thm:Weak HK},
$\vext$ will be required to have this property on a dense set, 
and $\vint$ on a set which, in addition, has a strong type of connectivity. 

In connection with these remarks, we recall that the topological notion
of denseness is very different from the measure-theoretic notion of 
almost everywhere. 
The rationals are a dense closed subset of $\Real$ with zero measure,
for instance. An example of a dense {\em open} subset with measure
as small as desired is provided by the complement of a fat Cantor set:
Beginning with the unit cube, remove open balls of radius $\epsilon$
centered at all points with coordinates $x_i \in \Ints$, then of radius
$\epsilon/2^\alpha$ centered at $x_i \in \Ints/2$, ..., of radius
$\epsilon/2^{n\alpha}$ centered at $x_i \in \Ints/2^n$, ....
After all this removal, the remaining (fat Cantor) set 
is a closed set with empty interior. 
However, the measure of the removed set does not exceed 
$c \epsilon\sum_{n\ge 0} 2^{(d-\alpha)n}$; for $\alpha > d$, this can
be made as small as desired by taking $\epsilon$ small. 

\subsection{Schechter-Simon weak UCP}
\label{sec:weak UCP}

Our next restriction on potentials is motivated by the possibility
of using the weak unique continuation property (UCP) in Thm.~\ref{thm:Schechter+Simon}.
We recall that for an open set $\Omega$, 
the Sobolev space~\cite{Adams,Lieb+Loss} 
$H^1_0(\Omega)$ is the Hilbert space obtained by 
completing $C_c^\infty(\Omega)$ with respect to (the norm derived from) the 
inner product
\begin{equation}
\langle f | g \rangle_{H^{1}(\Omega)} = \int_\Omega \overline{f}g + 
\nabla\overline{f}\cdot\nabla g\, dx.
\end{equation}
\begin{thm}[Schechter \& Simon, Thm. 2.1 of Ref.~\onlinecite{Schechter+Simon}]
\label{thm:Schechter+Simon}
Suppose that for some open ball $B \subseteq\Real^n$,
$V:B \rightarrow \Real$, and constant $C$,
\begin{equation}
\|V\eta \|_{L^2(B)} \le C \|\eta \|_{H^1(B)}
\label{eq:schechter-simon-condition}
\end{equation}
for all $\eta \in C_c^\infty(B)$.
Then, if $u$ (not presumed to be smooth) satisfies
\begin{equation}
|\nabla^2u| \le |Vu|,
\label{eq:diff-inequality}
\end{equation}
$u$ cannot vanish on an open subset of $B$ without vanishing on all of it.
\end{thm}

Note that the conclusion extends immediately to any connected open set
$\Omega = \bigcup_\alpha B_\alpha$, where for each open ball $B_\alpha$
there is a corresponding $C_\alpha$ as in the theorem.
For, if $I = \setof{\alpha}{u \;\text{vanishes on an open subset of}\; B_\alpha}$,
$\Omega^\prime = \bigcup_{\alpha\in I}B_\alpha$ and 
$\Omega^{\prime\prime} = \bigcup_{\alpha\not\in I}B_\alpha$, 
then $\Omega = \Omega^\prime \cup \Omega^{\prime\prime}$. However, according
to Thm. \ref{thm:Schechter+Simon}, $u \equiv 0$ on $\Omega^{\prime}$,
so that $\Omega^\prime$ and $\Omega^{\prime\prime}$ are disjoint.
Since $\Omega$ is connected, one of $\Omega^\prime$ and $\Omega^{\prime\prime}$ 
must be empty.

\subsection{Spaces of Sobolev multipliers}
\label{sec:Sobolev-multipliers}

This section formalizes and elucidates the condition on $V$ identified 
in Thm.~\ref{thm:Schechter+Simon}.
Since $C_c^\infty(\Omega)$ is dense in $H^1_0(\Omega)$, 
that condition is rephrased as: multiplication by $V$ is a bounded operator 
$H^1_0(\Omega) \rightarrow L^2(\Omega)$ with norm less than or equal to ${C}$.

\begin{defn}[Sobolev multipliers\cite{Mazya-85-book,Mazya-09-book}]
\label{def:sobolev-multiplier}
For an open connected set $\Omega \subseteq \Real^d$, 
\hbox{$M(H^1_0(\Omega) \rightarrow L^2(\Omega))$} denotes the
set of functions $f: \Omega\rightarrow \Real$ 
such that $fg \in L^2(\Omega)$ whenever $g\in H^1_0(\Omega)$.
Multiplication by $f \in M(H^1_0(\Omega) \rightarrow L^2(\Omega))$ 
is automatically bounded as a linear operator 
$H^1_0(\Omega) \rightarrow L^2(\Omega)$ 
by an argument based on the Closed Graph theorem, so the operator norm
\begin{equation}
\|f\|_{M(H^{1}_0(\Omega) \rightarrow L^{2}(\Omega))} = \sup 
\setof{ \|fg\|_{L^{2}(\Omega)} }{\|g\|_{H^{1}_0(\Omega)} = 1}
\end{equation}
is well-defined.
\end{defn}

For any open ball $B \subseteq \Real^3$, 
$L^3(B) \subset M(H^{1}_0(B) \rightarrow L^{2}(B)$
follows from a H\"older, then a Sobolev, inequality [for the latter, see 
e.g., Thm. V.5.4 of Ref. \onlinecite{Adams}]:
\begin{equation}
\|Vf\|_{L^2} \le \|V\|_{L^3} \|f\|_{L^6} \le c \|V\|_{L^3} \|f\|_{H^{1}}.  
\end{equation}
We will show that the larger space $L^3_{\text{wk}}(B) \supset L^3(B)$
is also contained in the multiplier space $M(H^{1}_0(B) \rightarrow L^{2}(B))$.
A function $f$ is in the weak $L^p$ space $L^p_{\text{wk}}(\Omega) \supset L^p(\Omega)$ 
if the quasinorm~\cite{Grafakos-Classical-Fourier-Analysis,Lieb+Loss}
\begin{equation}
\|f\|_{p,\mathrm{wk}}^p \defeq \sup_t\, t^{p-1}\Phi_f(t),
\end{equation}
is finite, where (`$\Leb$' denotes Lebesgue measure)
\begin{equation}
\Phi_f(t) \defeq \Leb{\setof{x\in\Omega}{ |f(x)| > t}}.
\end{equation}

Combining Eq. (2.3.4) and Prop. 1.2.8 of Ref. \onlinecite{Mazya-09-book},
[equivalently Eq. (3), \S 1.3.2 and Prop. 1, \S 1.1.4 of 
Ref. \onlinecite{Mazya-85-book}] yields
\begin{equation}
\|f \|_{M(H^1(\Real^3)\rightarrow L^2(\Real^3))}^2 
\le c
\sup_{\mathrm{diam}\, A \le 1} \frac{\|f \chi_A \|_{L^2}^2}{(\Leb{A})^{1/3}}.
\end{equation}
Clearly, among sets $A$ of given measure, 
the ratio inside the final $\sup$ is maximized by one of the form $A = \{|f| > t\}$.
Therefore,
\begin{equation}
\label{eq:ratio-1}
\|f \|_{M(H^1(\Real^3)\rightarrow L^2(\Real^3))}^2 
\le c
\sup_{t} \frac{\int_0^{\Phi_f(t)} s^2 d[\Phi_f(s)]}{(\Phi_f(t))^{1/3}},
\end{equation}
Rewriting the Stieltjes integral in the numerator as
\begin{equation}
\int_0^{\Phi_f(t)} s^2 d[\Phi_f(s)] = 
3\int_0^{\Phi_f(t)^{1/3}} [s^3 \Phi_f(s)]^{2/3} d[\Phi_f(s)^{1/3}]
\le 3{\Phi_f(t)^{1/3}} [\sup_s s^3 \Phi_f(s)]^{2/3}
\end{equation}
we see that the ratio (\ref{eq:ratio-1}) is bounded if $f$ belongs to $L^3_{\mathrm{wk}}$.
That is,
$L^3_{\mathrm{wk}}(\Real^3) \subseteq M(H^{1}(\Real^3) \rightarrow L^{2}(\Real^3))$.
Finally, noting that $L^3_{\text{wk}}(B)$ is isometrically embedded in 
$L^3_{\text{wk}}(\Real^3)$, while there is a bounded extension operator
$H^1_0(B) \rightarrow H^1(\Real^3)$, we obtain
\begin{equation}
L^3_{\mathrm{wk}}(B) \subseteq M(H_0^{1}(B) \rightarrow L^{2}(B)).
\end{equation}
The singularity of the Coulomb potential is just weak enough to be in
$L^3_{\mathrm{wk}}(\Real^3)$. The potential of a dense set of point charges in $\Real^3$
is then in $L^3_{\mathrm{wk,loc}}(\Real^3)$ as long as for every bounded set,
the sum of absolute values of charges therein is finite.

The following lemma is important because we will need to know that 
$\Vext$ inherits from  $\vext$ the property of being a Sobolev multiplier. 
\begin{lem}
If $f \in M(H^1_0(\Omega) \rightarrow L^2(\Omega))$ 
then 
\newline
(i)
$f \in M(H^1_0(\Omega^\prime) \rightarrow L^2(\Omega^\prime))$, 
where $\Omega^\prime \subseteq \Omega$;
(ii) $f\circ\pi^n_m \in M(H^1_0(\Omega\times\Real^{n-m}) \rightarrow L^2(\Omega\times \Real^{n-m}))$
where $\Omega \subseteq \Real^m$, and $\pi^n_m:\Real^n \rightarrow \Real^m$ is 
projection on the first $m$ components.
\end{lem}
\begin{proof}
(i) is immediate. (ii) follows from the fact that 
$H^1_0(\Omega\times\Real^{n-m})$ and $L^2(\Omega\times\Real^{n-m})$ are tensor products.
\end{proof}

\subsection{Multiplier sets and super-connectedness}
\label{sec:multiplier sets}

\begin{defn}[Multiplier points and sets]
\label{def:multiplier}
$x$ is a {\it multiplier point\/} of the potential 
$v:\Real^n \rightarrow \Real$ 
($n=3$ for $\vext$ or $\vint$, $n = {3\Num}$ for $\Vext$, $\Vint$ or $\Vtot$)
if there is an open connected neighborhood 
$U$ of $x$ such that the restriction of 
$v$ to $U$ is in $M(H^{1}_0(U) \rightarrow L^2(U))$.
The {\it multiplier set} of $v$, denoted $\Md(v)$, is the
set of all multiplier points of $v$, and is necessarily open. 
\end{defn}

Effective utilization of the Schechter-Simon UCP requires that
$\Md(\Vtot)$ be dense in $\Real^{3\Num}$.
That is because $\rho_1(x_1)$, say, is an integral over
$\Num-1$ particle coordinates and therefore reflects
the potential at points far distant from $x_1$. This nonlocality
was already used to advantage in the Core HK thm. \ref{thm:Core HK}, 
and will be advantageous
again in the next section. Here, however, it is mostly constraining.
The Schechter-Simon UCP \ref{thm:Schechter+Simon} implies that
if $\psi$ vanishes on an open subset of $U\in \comps\Md(\Vtot)$, 
then it vanishes identically on $U$. In general, however, this
is not useful for deducing anything about $\rho$.
For example, it is not ruled out that all of the density in $U$ is 
inherited from the region outside $\Md(\Vtot)$.

Not only denseness, but also connectivity, matters.
The connected components of $\Md(\Vext)$ are simply cartesian products
of those of $\Md(\vext)$. However, even if $\Md(\Vint)$ is connected and dense,
some of those components of $\Md(\Vext)$ could become chopped up in
$\Md(\Vtot)$. Hence, we will require something stronger than connectivity
of $\Md(\vint)$ in order to make progress.
\begin{defn}[Super-connected] A set $X$ is {\it super-connected} if,
for any open connected set $U$, $X\cap U$ is connected.
\end{defn}
We need this concept only for open dense sets, the
crucial point --- that $\Md(\Vint)$ is open dense and super-connected (ODS)
whenever $\Md(\vint)$ is so ---
following from the observations that being ODS 
is a property stable under both intersection and cartesian product. 
The first of these is quite simple: let $A$ and $B$ be ODS and $U$ open and connected.
Then, $(A\cap B)\cap U = A\cap (B\cap U)$ is open and connected.
For stability under cartesian product, we use the next Lemma, which shows
that an open dense set is super-connected if it has connected intersections with 
all elements of a base for the topology, such as 
open balls of radius less than some $r$, or open rectangles in $\Real^{3\Num}$.
For, if $A\subseteq \Real^n$ and $B\subseteq \Real^m$ are ODS, 
a base of open rectangles in $\Real^{n+m}$ witnesses that $A\times \Real^m$ and
$\Real^n\times B$ are also, and intersection-stability then shows that
$A\times B$ is ODS. 
\begin{lem}
If open dense set $A$ has connected intersection with all members of a
base for the topology, then $A$ is super-connected.  
\end{lem}
\begin{proof}
For an open connected set $U$, write $U=\bigcup {\cal C}$, 
where ${\cal C}$ is a subset of the base.
If $A\cap U = U^\prime \cup U^{\prime\prime}$ for disjoint open sets 
$U^\prime$ and $U^{\prime\prime}$, let 
${\cal C}^\prime = \setof{W \in {\cal C}}{A\cap W \subseteq U^\prime}$ and 
define ${\cal C}^{\prime\prime}$ similarly from $U^{\prime\prime}$.
Now, $\bigcup {\cal C}^{\prime}$ cannot be disjoint from $\bigcup {\cal C}^{\prime\prime}$ 
since $U$ is connected, hence there are 
$W^\prime \in {\cal C}^{\prime}$ and $W^{\prime\prime}\in {\cal C}^{\prime\prime}$ with
$A\cap W^\prime \cap W^{\prime\prime} \neq \varnothing$, since $A$ is dense.
But this contradicts 
$A\cap W^\prime \cap W^{\prime\prime} \subseteq U^\prime \cap U^{\prime\prime} = \varnothing$.
\end{proof}

\subsection{Weak HK theorems}
\label{sec:Weak thms}

Now we want to consider the following hypotheses:
\begin{align}
& \Md(\vint) \;\text{is dense in}\; \Real^3\;\text{and super-connected}
\label{eq:H1}
\tag{H1}
\\
& \Md(\vext) \;\text{is dense in}\; \Real^3
\label{eq:H2}
\tag{H2}
\\
& \Md(\vext) \;\text{is connected}
\label{eq:H3}
\tag{H3}
  \end{align}
The keys to effective use of these hypotheses are the
Schechter-Simon UCP \ref{thm:Schechter+Simon} and 
Lemma \ref{lem:single particle transfer}.
The development in this subsection is split into two tracks.
Section \ref{sec:H1+H2} works under assumption of only (H1) and (H2).
The results of Section \ref{sec:H1+H2+H3}, assuming also (H3), 
could be derived quickly as corollaries, but it is convenient to
give an independent, and simpler, development. The reader can proceed
to Section \ref{sec:Strong} after Section \ref{sec:H1+H2+H3}.
However, the local Weak HK Thm. \ref{thm:local Weak HK} of Section \ref{sec:H1+H2}
becomes relevant again in the Summary Section \ref{sec:summary}.

\subsubsection{consequences of (H1), (H2), and (H3)}
\label{sec:H1+H2+H3}

\begin{lem}
\label{lem:H123}
Assume (\ref{eq:H1}), (\ref{eq:H2}), and (\ref{eq:H3}).
Then, $\{\psi \neq 0\}$ is dense in $\Real^{3\Num}$.
\end{lem}
\begin{proof}
$\Omega = \Md(\vext)$ is a connected set dense in $\Real^3$,
hence $\Omega^\Num$ is connected and dense in $\Real^{3\Num}$. 
Hypothesis (H1) then ensures that $\Md(\Vtot)$ is also dense and connected.
Now, if $\{\psi= 0\}$ had an interior, it would intersect $\Md(\Vtot)$, 
the Schechter-Simon UCP would imply that $\psi = 0$ everywhere on $\Md(\Vtot)$,
and denseness of the latter would imply that the continuous function $\psi$ was
identically zero. That being impossible, conclude that $\{\psi \neq 0\}$ is
dense.
\end{proof}

\begin{lem}
\label{lem:Weak 2}
Assume (\ref{eq:H1}), (\ref{eq:H2}), and (\ref{eq:H3}).
Then, 
$\{\rho > 0\}$ is a uniqueness set dense in $\Real^3$. 
Therefore, all uniqueness points are $c$-equivalent.
\end{lem}
\begin{proof}
Lemma \ref{lem:H123} immediately implies that
$\{\rho > 0\}$ must be dense. But, it also provides conditions to
apply Lemma \ref{lem:single particle transfer} since now for any
connected components $U_1,\ldots,U_\Num$ of $\comps \{\rho > 0\}$,
$U_1\times \cdots \times U_\Num$ is guaranteed to intersect $\{\psi\neq 0\}$.
As a result, we can conclude that $\{\rho > 0\}$ is itself a uniqueness set. 
That is the first conclusion. Since $\{\rho > 0\}$ is dense, every uniqueness set
must intersect it, and the second conclusion follows by patching.
\end{proof}
\begin{thm}[Weak HK]
\label{thm:Weak HK}  
Assume (\ref{eq:H1}), (\ref{eq:H2}), and (\ref{eq:H3}).
Then $\Real^3$ is a uniqueness set if and only if $\{\rho= 0\}$ has zero measure.
\end{thm}
\begin{proof}
$\{\rho=0\}$ having zero measure is certainly a necessary condition for
$\Real^3$ to be a uniqueness set.
That it is sufficient follows from Lemma \ref{lem:Weak 2}.
\end{proof}
This result has a satisfying balance. The Core HK thm. \ref{thm:Core HK}
says that $\Real^3$ is a uniqueness set under the condition
that $\rho > 0$ everywhere,
but it does not say that the conclusion fails if the condition does.
It is not a clean dichotomy, but the Weak HK theorem is.

\subsubsection{consequences of (H1) and (H2)}
\label{sec:H1+H2}

We isolate the key argument in Lemma \ref{lem:DEU}, which parallels Lemma \ref{lem:H123}.
It uses the fact that intersection with $\Md(\Vint)$ preserves openness, connectedness and
denseness in $\Real^{3\Num}$, and it may be helpful to read the proof initially
assuming that $\vint\equiv 0$ so that $\Md(\Vint) = \Real^{3\Num}$.
\begin{lem}
\label{lem:DEU}
Assume (\ref{eq:H1}) and (\ref{eq:H2}), and suppose that
$W\in\comps \Md(\vext)$ intersects $\{\rho_1 > 0\}$.
Then, 
$\{\psi\neq 0\}$ is dense in $W\times W_2\times \cdots\times W_\Num$, for
some $W_2,\ldots,W_\Num \in \comps \Md(\vext)$.
\end{lem}
\begin{proof}
(H1) and (H2) imply that $\Md(\Vext)\cap\Md(\Vint)$ is dense in $\Real^{3\Num}$.
If $[W\times \Md(\vext)^{\Num-1}] \cap \Md(\Vint)$
were contained in the {\em closed} set $\{\psi=0\}$, so would its closure
$(\cl W)\times \Real^{3(\Num-1)}$, implying $W \subseteq \{\rho_1=0\}$,
contrary to hypothesis.
Hence, for some $W_2,\ldots,W_\Num \in \comps \Md(\vext)$, 
with $R \defeq W \times W_2\times\cdots\times W_\Num$,
$R \cap \Md(\Vint)$ is open, connected, and intersects $\{\psi \neq 0\}$.
Then, according to the Schechter-Simon UCP (\ref{thm:Schechter+Simon}),
$\{\psi\neq 0\}$ is dense in $R \cap \Md(\Vint)$, whence dense in $R$ itself.
\end{proof}

\begin{thm}[local Weak HK]
\label{thm:local Weak HK}
Assume (\ref{eq:H1}) and (\ref{eq:H2}).
Then, $W\in \comps \Md(\vext)$ is 
contained in either $\intr\{\rho=0\}$ or $\intr \cl\{\rho>0\}$.
In the second case, 
$\bigcup\setof{ U\in\comps\{\rho > 0\} }{ U\cap W \neq \varnothing }$
is a uniqueness set dense in $W$. 
\end{thm}
\begin{proof}
Suppose $W \not \subseteq \intr\{\rho=0\}$. 
Then, $W \cap \cl\{\rho_n > 0\}\neq \varnothing$ 
for some $n$; without loss, take $n=1$. 
Lemma \ref{lem:DEU} now gives a nonempty open set $R^\prime\subseteq \Real^{3(\Num-1)}$
such that 
$W\times R^\prime \subset \cl \{\psi\neq 0\}$,
and therefore $W \subseteq \intr \cl \{\rho> 0\}$.
That is the first conclusion. Continuing,
since $\{\psi \neq 0\} \subseteq \{\rho > 0\}^\Num$,
there must be $U_2,\ldots,U_\Num \in \comps \{\rho>0\}$ such that
$\varnothing \neq (W\times R^\prime) \cap (\Real^3 \times U_2\times \cdots \times U_\Num)
=W\times[R^\prime \cap (U_2\times \cdots \times U_\Num)]
\subset \cl \{\psi \neq 0\}$. 
Hence, if $U \in \comps \{\rho>0\}$ intersects $W$, then
$U\times U_2\times\cdots\times U_\Num$ intersects $\{\psi\neq 0\}$.
But, (i) such $U$ cover a dense subset of $W$ (by first conclusion) and (ii)
their union is a uniqueness set by Lemma \ref{lem:single particle transfer}. 
\end{proof}

\section{Strong}
\label{sec:Strong}

In Section \ref{sec:Weak}, we found conditions under which $c$-equivalence
could be carried through connected components of $\Md(\vext)$ as well as
of $\{\rho > 0\}$, and that if $\Md(\vext)$ is connected and dense,
$\{\rho > 0\}$ itself is a uniqueness set. 
Under the hypotheses (\ref{eq:H1}), (\ref{eq:H2}), and (\ref{eq:H3}) of the previous 
Section, we show that a point is a uniqueness point, regardless of whether 
$\rho$ vanishes there or not, if $\vint$ is locally $3\Num/2$ integrable on
a dense set and $\vext$ is locally $3\Num/2$ integrable near the point in 
question.
The key to this development is the strong UCP of Jerison \& Kenig discussed in 
Section \ref{sec:strong UCP}. 
The local Strong HK and Strong HK theorems are proven in Section \ref{sec:Strong thms}.
In view of the increasingly constraining nature of $3\Num/2$-integrability 
with $\Num$, Section \ref{sec:prospects} briefly considers prospects for
a better strong UCP.

\subsection{Jerison-Kenig strong UCP}
\label{sec:strong UCP}

A weak UCP concludes vanishing on the entire domain from vanishing
on an open set. 
One might expect that we now appeal to a result which reaches the
same conclusion from an assumption of vanishing on a nonzero measure set.
{\em Strong\/} UCPs, however, actually use the following notion.
A locally square integrable function $\psi$ is said to 
{\it vanish to infinite order at the point} $y$ if for every 
$N \in \Nat$,
\begin{equation}
\int_{B_r(y)} |\psi(x)|^2\, d^nx = {\cal O}(r^N) \quad\text{as}\quad r\to 0.
\end{equation}
Fortunately, as we discuss momentarily, this is a weaker assumption in our context 
than vanishing on a set of nonzero measure. 

The best strong UCP for potentials in $L^p$ spaces is 
\begin{thm}[Jerison \& Kenig, p. 479 of Ref.~\onlinecite{Jerison+Kenig}]
\label{thm:Jerison+Kenig}
Suppose, with $n\ge 3$, $q=2n/(n+2)$, and $\Omega$ a connected open subset of $\Real^n$,
that  $u\in W^{2,q}_{\tloc}(\Omega)$ [derivatives up to order $2$ are in $L^q_\tloc(\Omega)$]
satisfies (\ref{eq:diff-inequality}) for 
$V\in L^{n/2}_{\tloc}(\Omega)$. Then, if $u$ vanishes to infinite order
at any point, it is identically zero.
\end{thm}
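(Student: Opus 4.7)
My plan is to prove this through a sharp weighted Carleman inequality of the form
\begin{equation}
\Big\| |x|^{-\tau} u \Big\|_{L^{p'}(\Real^n)} \le C \Big\| |x|^{-\tau} \nabla^2 u \Big\|_{L^{q}(\Real^n)},
\label{eq:carleman}
\end{equation}
where $p' = 2n/(n-2)$ is the Sobolev exponent dual to $q = 2n/(n+2)$, the constant $C$ is independent of the parameter $\tau$, and $\tau$ is allowed to range over a set in $(0,\infty)$ that stays uniformly separated from a discrete ``bad'' set (related to the spectrum of the spherical Laplacian on $S^{n-1}$). Without loss of generality, the point of infinite-order vanishing is the origin, which we arrange to lie in $\Omega$.

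The heart of the matter is establishing \eqref{eq:carleman}. I would effect the change of variables $x = e^{t}\omega$ with $(t,\omega) \in \Real \times S^{n-1}$, under which $|x|^{-\tau}$ becomes the exponential weight $e^{-\tau t}$ and the Laplacian transforms into a second-order operator on the cylinder. Conjugating by the weight produces a translation-invariant (in $t$) symbol, so Fourier transform in $t$ together with the spherical-harmonic decomposition on $S^{n-1}$ reduces the estimate to uniform bounds on a family of multiplier operators parametrized by $\tau$. The required $L^q \to L^{p'}$ bound, uniform as the symbol becomes singular, is precisely the sort of restriction-type estimate proved by Stein--Tomas and adaptable here; the condition that $\tau$ avoid bad values guarantees that the symbol does not vanish at the critical frequencies, and a complex-interpolation argument gives the sharp Lorentz-space version if needed.

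Once \eqref{eq:carleman} is in hand, the deduction is essentially routine. Choose $\chi \in C_c^\infty(B(0,2r))$ with $\chi \equiv 1$ on $B(0,r)$ and $r$ small enough that $\|V\|_{L^{n/2}(B(0,2r))} < 1/(2C)$, which is possible because $V\in L^{n/2}_\tloc(\Omega)$. Apply \eqref{eq:carleman} to $\chi u$ (which lies in $W^{2,q}$ with compact support), note that $\nabla^2(\chi u) = \chi \nabla^2 u + [\nabla^2,\chi] u$ with commutator supported in the annulus $A = B(0,2r)\setminus B(0,r)$, and use $|\nabla^2 u| \le |Vu|$ together with H\"older to estimate
\begin{equation}
\Big\| |x|^{-\tau} \chi \nabla^2 u \Big\|_{L^q}
\le \|V\|_{L^{n/2}(B(0,2r))} \Big\| |x|^{-\tau} \chi u \Big\|_{L^{p'}}.
\end{equation}
Absorbing this term into the left side leaves only the commutator contribution, whose weight is bounded on $A$ by $r^{-\tau}$, while on $B(0,r/2)$ the left side is at least $(r/2)^{-\tau}\|u\|_{L^{p'}(B(0,r/2))}$ times a constant. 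Dividing and sending $\tau \to \infty$ forces $u \equiv 0$ on $B(0,r/2)$, because the infinite-order vanishing at $0$ controls any inner contributions that would otherwise spoil the limit.

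Finally, I would pass from local to global vanishing by the standard connectedness argument: the set $Z = \{ y \in \Omega : u \text{ vanishes to infinite order at } y\}$ is nonempty by hypothesis, closed in $\Omega$ (as an intersection of closed conditions on $W^{2,q}_\tloc$ functions), and open by the local vanishing just proved applied at any $y\in Z$; hence $Z = \Omega$ and thus $u \equiv 0$. The sole serious obstacle in this program is the proof of the Carleman estimate \eqref{eq:carleman} with a constant independent of $\tau$ at the critical Sobolev pairing $(q,p')$; the subsequent absorption and propagation steps are, by comparison, standard.
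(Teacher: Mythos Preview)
The paper does not prove this theorem: it is quoted from Jerison and Kenig~\cite{Jerison+Kenig} as an external result and used only as a black box (in Thm.~\ref{thm:strong-HK} and Cor.~\ref{cor:dense-singularities}). There is therefore no ``paper's own proof'' to compare against.

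That said, your sketch is a faithful high-level outline of the original Jerison--Kenig argument: the Carleman inequality with power weights $|x|^{-\tau}$ at the endpoint Sobolev exponents, passage to the cylinder $\Real\times S^{n-1}$ via $x=e^t\omega$, Fourier transform in $t$ combined with spherical-harmonic decomposition, and the standard absorption-plus-connectedness wrap-up. You correctly identify that virtually all the work lies in the uniform-in-$\tau$ Carleman estimate; the rest is routine. Two small points deserve care. First, applying \eqref{eq:carleman} to $\chi u$ requires the weighted norms to be finite near the origin, and it is exactly the infinite-order vanishing that ensures this (one approximates by functions supported away from $0$); your phrase ``controls any inner contributions'' gestures at this but would need to be spelled out. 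Second, closedness of your set $Z$ is not automatic for $W^{2,q}_{\tloc}$ functions; the cleaner route is to argue that the set where $u$ vanishes on a neighborhood is both open (trivially) and closed (because a point on its boundary is a point of infinite-order vanishing, to which the local argument reapplies), and is nonempty by the first step.
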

In the context of many-body quantum mechanics, $n=3\Num$ is the dimension of
the configuration space. This is very unfortunate because $\vext$ 
and $\Vext$ sit at exactly the same point on the $L^p$ scale,
so that the larger the particle number $\Num$, the stronger the
restriction placed on $\vext$.
For $\Num \ge 2$, application of this theorem requires at least that
$\vext$ and $\vint$ be in $L^{3}_\tloc$, so these potentials are in $\Univ$.
Assuming the potentials are so, we must check that a solution $\psi$
of the Schr\"odinger eigenvalue equation (\ref{eq:Schrodinger})
satisfies the hypothesis $\psi \in W^{2,q}_\tloc(\Omega)$.
Since $q$ increases toward 2 as $n\to\infty$ ($\Num \to\infty$ in our application), 
it suffices to check
that $\nabla^2 \psi$ (which is equal to $\Vtot \psi$) is locally 
square integrable. But, $\psi$ is continuous, hence locally bounded. 
With $\Vtot \in L^{3\Num/2}_\tloc(\Omega)$, the hypothesis will
therefore be satisfied for $\Num \ge 2$.

It has been shown by de Figueiredo \& Gossez\cite{de-Figueiredo+Gossez-92},
as well as Regbaoui\cite{Regbaoui-01},
that for the $n$-dimensional Schr\"odinger equation with a potential in 
$L^{n/2}_{\tloc}(\Real^n)$, vanishing on a set $Z$ of nonzero measure implies vanishing 
to infinite order at almost every point of $Z$. 
This makes Thm. \ref{thm:Jerison+Kenig} relevant to our needs.
Prop. \ref{prop:zero-infinite-order} in Appendix \ref{sec:strong-implies-measure} 
presents a self-contained proof of a strengthening to the particular case which concerns 
us, namely that of a total potential derived from one- and two-body potentials. 
$\vext$ and $\vint$ are required only to be in $L^{3/2}_{\tloc}(\Real^3)$, 
indepedently of $\Num$. 
Prop. \ref{prop:zero-infinite-order} and Thm. \ref{thm:Jerison+Kenig} will
be applied via the Corollary following the next Definition.

\subsection{Sets of local ${3\Num/2}$-integrability}
\label{sec:lb-points}

\begin{defn}
\label{def:3N/2}
For $\Num\in\Nat$, the potential $v:\Real^3 \rightarrow \Real$ is 
{\it locally-$L^{3\Num/2}$ at} $x$ if there is an open neighborhood 
$U$ of $x$ such that $v \in L^{3\Num/2}(U)$.
The set of all points at which $v$ is {\it locally-$L^{3\Num/2}$} 
is denoted $\Tm_\Num(v)$. This is the largest set on which $v$ is
locally-$L^{3\Num/2}$ in the usual usage. [Thus, the standard term
``locally $L^p$'' is ``locally $L^p$ {\em everywhere}'' in 
current parlance.]
\end{defn}
Recall our standing assumption that $\Num \ge 2$. In that case,
$\Tm_\Num(v) \subseteq \Md(v)$ for $v=\vext$ or $\vint$.

\begin{cor}
\label{cor:measure UCP}
Assume $U \subseteq \Tm_\Num(\Vtot)$, and $\psi$ is a solution of the Schr\"odinger
equation (\ref{eq:Schrodinger}) in $U$ which is not identically zero.
Then, $\{\psi = 0\}\cap U$ has zero measure.
\end{cor}
\begin{proof}
Prop. \ref{prop:zero-infinite-order} shows that $\psi$ vanishes to infinite
order at almost every point of $\{\psi = 0\}$. Hence, if $\{\psi = 0\}$ has
nonzero measure, $\psi$ certainly vanishes to infinite order somewhere, and
therefore vanishes throughout $U$, according to Thm. \ref{thm:Jerison+Kenig}.
\end{proof}

\subsection{Strong HK theorems}
\label{sec:Strong thms}

To the hypotheses (\ref{eq:H1}) -- (\ref{eq:H3}) from the previous section,
we now consider adding two more:
\begin{align}
& \Tm_\Num(\vint) \;\text{is dense in}\; \Real^3
\label{eq:H4}
\tag{H4}
\\
& \Tm_\Num(\vext) \;\text{is of full measure}
\label{eq:H5}
\tag{H5}
  \end{align}
Paralleling what was done in the previous section, we will first establish 
that under hypothesis (\ref{eq:H4}), $\Tm_\Num(\vext)$ is a uniqueness set.
A lemma prepares the ground.
\begin{lem}
\label{lem:W-dense}
If $W\subseteq\Real^3$ is open and dense, then for given $x_1$, 
the set of $(x_2,\ldots,x_\Num)$ such that $x_n-x_m \in W$ for 
all $1\le m < n \le \Num$ is open and dense in  $\Real^{3(\Num-1)}$.
\end{lem}
\begin{proof}
$W_{n} = \setof{(x_2,\ldots,x_\Num)\in \Real^{3(\Num-1)}}{x_n - x_1 \in W}$ 
for $2 \le n \le \Num$, and
$W_{n,m} = \{(x_2,\ldots,x_\Num)\in \Real^{3(\Num-1)}\,|\, x_n - x_m \in W\}$ 
for $2\le m < n \le \Num$ 
are all clearly open and dense in $\Real^{3(\Num-1)}$,
hence so is their intersection. But the latter is precisely the required set.
\end{proof}
In the following theorem, this Lemma will be used under hypothesis (H4), with
$\Tm_\Num(\vint)$ in the r\^ole of $W$. Then it says that for any $x_1\in\Real^3$,
$\Tm_\Num(\Vint)$ is dense in the $3\Num-1$ dimensional fiber $\pi_1^{-1}(x_1)$
above $x_1$.

\begin{thm}[local Strong HK]
\label{thm:local Strong HK}
Assume (\ref{eq:H1}) -- (\ref{eq:H4}).
Then, $\Tm_\Num(\vext)$ is a uniqueness set.
{\it A fortiori}, 
$\Tm_\Num(\vext)$ is almost everywhere contained in $\{\rho > 0\}$
(equivalently, almost everywhere disjoint from $\{\rho=0\}$).
\end{thm}
\begin{proof}
By Lemma \ref{lem:Weak 2}, it suffices to show that 
$\rho > 0$ a.e. on a neighborhood of each point in $\Tm_\Num(\vext)$.
(That Lemma already says that $\rho > 0$ on a {\it dense} subset of $\Tm_\Num(\vext)$, 
so the issue is, loosely, one of closing the gap between `dense' and `almost everywhere'.)

Take \hbox{$x_1\in U \subseteq \Tm_\Num(\vext)$}, and consider the open set 
$U^\Num \defeq U\times\cdots \times U 
\subseteq \Tm_\Num(\Vext) \subseteq \Real^{3\Num}$.
As mentioned just before the statement of the Theomrem, 
Lemma \ref{lem:W-dense}, in conjunction with (H4) implies existence of 
some open connected set 
$W \subseteq U^\Num \cap \Tm_\Num(\Vint) \subseteq \Tm_\Num(\Vtot)$ satisfying
$x_1 \in \pi_1(W)$.

Now, by Lemma \ref{lem:H123}, $\{\psi\neq 0\}$ is dense in $\Real^{3\Num}$,
{\it a fortiori} in $W$. 
So, Cor. \ref{cor:measure UCP} implies that 
$\psi\neq 0$ almost everywhere on $W$. 
Therefore, by integration (over $x_2,\ldots,x_\Num$), 
$\rho_1 > 0$ a.e. on $\pi_1(W) \ni x_1$. 
\end{proof}
Under the hypotheses of the theorem, vanishing of $\rho$ on a set of
nonzero measure can be {\em locally} attributable to $\vext$ failing to be 
$3\Num/2$-integrable, since $\rho > 0$ almost everywhere on
$\Tm_\Num(\vext)$. This is similar in flavor to Thm. \ref{thm:Weak HK}
where it was shown that no point of $\bndy\intr\{\rho=0\}$ is in $\Md(\vext)$.

Now, to get full Hohenberg-Kohn style uniqueness, we need only assume
that $\Tm_\Num(\vext)$ is a set of full measure.
\begin{cor}[Strong HK]
\label{cor:Strong HK}
Assume (\ref{eq:H1}), (\ref{eq:H3}), (\ref{eq:H4}), and (\ref{eq:H5}).
Then, $\Real^3$ is a uniqueness set.
\end{cor}
\begin{proof}
$\Tm_\Num(\vext) \subseteq \Md(\vext)$, so hypothesis (H5) implies 
(H2). Thm. \ref{thm:local Strong HK} then applies.
\end{proof}

\subsection{Prospects for improvement}
\label{sec:prospects}

Insofar as there are no conditions on the density, the Strong 
HK Thm. \ref{cor:Strong HK} is the 
result closest to the traditional Hohenberg-Kohn theorem.
Since the total number of particles can certainly be determined from the 
density, a $\Num$-dependent condition does not violate the spirit of DFT.
However, the condition of local $3\Num/2$-integrability grows increasingly more 
stringent with $\Num$.
One might feel that the emphasis on $\Md(\vext)$ being merely dense, or
$\Tm_\Num(\vext)$ only of full measure are merely a pathetic attempt to squeeze
out a bit more generality. However, for the Strong HK theorem with large $\Num$,
it really matters. The measure zero exceptional set is very important to be
able to accomodate even isolated Coulomb singularities.
The possibility for improvement is unclear, since the $n/2$ exponent in the
Jerison-Kenig strong UCP is best possible. One would have to exploit the
gap between the differential inequality and differential equation, 
between vanishing on a set of nonzero measure and vanishing to infinite order,
or the special form (in configuration space) of the potential.

\section{Summary}
\label{sec:summary}

The Introduction bundled the three HK theorems of this paper into 
the somewhat vague form of the Omnibus HK theorem.
At this point, with the notions of $c$-equivalence (Def. \ref{def:uniqueness}),
Sobolev multiplier (Def. \ref{def:multiplier}),
and local $3\Num/2$-integrability (Def. \ref{def:3N/2}),
a more precise synthesis may be given by way of summary. 
In order to focus on the external potential $\vext$, 
which is anyway the more interesting, 
the following summary assumes that $\Tm_\Num(\vint)$
is dense and $\Md(\vint)$ is super-connected. The usual Coulomb
interaction satisfies these requirements easily, of course.
The Local HK theorems \ref{thm:local Core HK}, \ref{thm:local Weak HK} and
\ref{thm:local Strong HK} may be combined to yield these three assertions:
\begin{enumerate}
\item[(a)]
Each connected component of $\{\rho > 0\}$ is a uniqueness set.
\item[(b)]
If $\Md(\vext)$ is dense in $\Real^3$, then
the intersection of 
a connected component of $\{\rho > 0\} \cup \Md(\vext)$ 
with $\{\rho > 0\}$ 
is a uniqueness set.
That is, two points in $\{\rho > 0\}$ are $c$-equivalent if
they can be connected by a continuous path within $\{\rho > 0\} \cup \Md(\vext)$.
\item[(c)]
If, in addition, $\Md(\vext)$ is connected, then 
$\{\rho > 0\}$ is dense and 
$\{\rho > 0\} \cup \Tm_\Num(\vext)$ is a uniqueness set.
\end{enumerate}
From this, $\Real^3$ is a uniqueness set, i.e., full uniqueness in the 
sense of Hohenberg and Kohn holds relative to the universe $\Univ$, 
if any of the following holds:
\begin{enumerate}
\item[(a$^\prime$)]
$\rho > 0$ everywhere.
\item[(b$^\prime$)]
$\rho > 0$ almost everywhere,
$\Md(\vext)$ is dense,
and
$\{\rho > 0\} \cup \Md(\vext)$ is connected.
\item[(c$^\prime$)]
$\Md(\vext)$ is dense and connected,
and $\{\rho > 0\} \cup \Tm_\Num(\vext)$ has full measure.
\end{enumerate}
b$^\prime$ and c$^\prime$ are strengthenings of the Weak and Strong 
cases of the Omnibus HK theorem from the Introduction.

\acknowledgments{
I thank Jorge Sofo,
Vin Crespi, Thomas Hoffmann-Ostenhof and Markus Penz for 
comments and suggestions, and an anonymous referee whose
advice and questioning have led to vast improvements
in the paper.
}

\appendix


\section{vanishing on set of nonzero measure implies
a zero of infinite order}
\label{sec:strong-implies-measure}

In this appendix, we work in a general dimension $d$, rather than 3.
\begin{prop}
\label{prop:zero-infinite-order}
Suppose $\psi \in H_{\tloc}^1(\Real^{D})$ ($D \equiv \Num d$) 
is a solution of the Schr\"odinger equation (\ref{eq:Schrodinger}),
with $\vext,\vint \in L_{\tloc}^{d/2}(\Real^d)$.
Then, if $\psi$ vanishes on a set $Z$ of nonzero measure, it vanishes to
infinite order at almost every point of $Z$.
\end{prop}
The weak formulation is 
\begin{equation}
  \label{eq:weak-form}
\int \nabla\psi\cdot\nabla{\overline{\eta}} \, d^{D}x 
+ \int \Vtot \psi {\overline{\eta}} \, d^{D}x = 0, \quad \forall \eta\in C_c^\infty(\Real^{D}).
\end{equation}
The proof of the Proposition will use
\begin{lem}
\label{lem:doubling}
With everything as in the statement of Prop. \ref{prop:zero-infinite-order} and
$B_r(x)$ denoting the open ball of
radius $r$ about an arbitrary point $x\in \Real^{D}$ (later we drop the `$x$'),
for $r$ small enough,
  \begin{equation}
\int_{B_r(x)} |\nabla\psi|^2\, d^{D}x \le \frac{c(x)}{r^2} \int_{B_{2r}(x)} |\psi|^2\, d^{D}x.
  \end{equation}
\end{lem}
\begin{proof}
Take $h:\Real^{D}\rightarrow [0,1]$ a smooth bump function equal to 1 on $B_r(x)$ 
and supported on $B_{2r}(x)$, with $|\nabla h| \le 2/r$. 
In (\ref{eq:weak-form}), substitute $h^2\psi$ for $\eta$ (each integral 
is continuous in $\eta$ with respect to $H^{1}$ norm).
Then,
\begin{equation}
\label{eq:to-bnd-rhs}
\int |h\nabla\psi|^2 \, d^Dx= 
-2 \int h\nabla{\psi}\cdot\overline{\psi} \nabla h  \, d^Dx
- \int \Vtot |h\psi|^2\, d^Dx
\end{equation}
It is the left-hand side here that needs to be bounded.
Proceed by bounding each term on the right-hand side separately. The first is quickly
dispatched:
\begin{equation}
\label{eq:bound-1st-term}
\left| 2 \int h\nabla{\psi}\cdot\overline{\psi} \nabla h  \, d^Dx \right|
\le 
2\|h\nabla\psi\|_2 \|\psi\nabla h\|_2 
\le
\frac{1}{3}\|h\nabla\psi\|_2^2 + {3}\|\psi\nabla h\|_2^2.
\end{equation}
For the second integral in (\ref{eq:to-bnd-rhs}), recall that 
$\Vtot$ is a sum of $n = \Num(\Num+1)/2$ terms,
$\Num$ of which are of the form $\vext(x_i)$ and $\Num(\Num-1)/2$ of the form $\vint(x_j-x_i)$.
We show how to handle $\vext(x_1)$.
Define
\begin{equation}
\tilde{\rho}(x_1) = \int_{\Real^{(\Num-1)d}} |h\psi|^2 dx_2\cdots dx_\Num.
\label{eq:tilde rho}
\end{equation}
Putting this into the integral in (\ref{eq:to-bnd-rhs}),
split the integration domain according to whether $|\vext(x_1)|$ is smaller or
larger than some constant $M$, to be chosen later.
Then, bound the contribution of $\vext(x_1)$ to the integral by
\begin{equation}
M\int_{|\vext(x)|<M}\tilde{\rho} \; d^dx 
 + \int_{|\vext(x)|\ge M} |v|\tilde{\rho}\; d^dx
\le 
M\|h\psi\|_2^2 
+ \| \vext(x_1)\cdot \chi(B_{2r}(x_1))\, \chi(|\vext(x_1)|\ge M)\|_{\frac{d}{2}} \|\tilde{\rho}\|_{\frac{d}{d-2}},
\label{eq:B-bound}
\end{equation}
where the last term was obtained by use of the H\"older inequality,
and $M$ is yet to be chosen.
The other $n-1$ terms are handled in nearly the same way. For instance,
the analog of (\ref{eq:tilde rho}) for $\vint(x_2-x_1)$ holds $x_1-x_2$ fixed.

Now, for the second factor in the last term of (\ref{eq:B-bound}), use
\begin{equation}
\|\tilde{\rho}\|_{\frac{d}{d-2}} = 
\|\tilde{\rho}^{1/2} \|_{\frac{2d}{d-2}}^2 \le C \|\nabla\tilde{\rho}^{1/2}\|_2^2 
\le C \|\nabla(h\psi)\|_2^2.
\end{equation}
The first inequality is a Sobolev inequality [e.g., Thm. V.5.4 of Ref. \onlinecite{Adams}, or
Thm. 5.26 of Ref. \onlinecite{Robinson-Infinite-Dim-Dyn-Sys}], and the second follows, 
as in Thm. 1.1 of Ref.~\onlinecite{Lieb83}, from 
\begin{equation}
|\nabla\tilde{\rho}| =
\left| 2\re\, \int \nabla(h\psi) (h\overline{\psi}) dx_2\cdots dx_\Num \right|
\le 2|\tilde{\rho}^{1/2}| \left( \int |\nabla(h\psi)|^2 dx_2\cdots dx_\Num \right)^{1/2}
\end{equation}
upon division of both sides by $|\tilde{\rho}^{1/2}|$, squaring and integrating.

Turning to the first factor in the last term of (\ref{eq:B-bound}), 
\hbox{$\| \vext(x_1) \cdot \chi(B_{2r}(x_1)) \chi(|\vext(x_1)|\ge M)\|_{{d}/{2}}$},
choose $M$ large enough (depending on $x$) that it, 
as well as the $n-1$ similar factors arising from the other terms of $\Vtot$
are all less than $1/(6n C)$. This is possible since $\vext, \vint \in L^{d/2}_\tloc$.
All together, then,
\begin{equation}
\label{eq:bound-2nd-term}
\left| \int \Vtot |h\psi|^2 d^Dx \right|
\le n M \|h \psi\|_2^2 
+ \frac{1}{6} \|\nabla(h \psi)\|_2^2
\le n M \|h \psi\|_2^2 
+ \frac{1}{3} \|h\nabla\psi\|_2^2 + \frac{1}{3} \|\psi\nabla h\|_2^2.
\end{equation}
Inserting the bounds (\ref{eq:bound-1st-term}) and (\ref{eq:bound-2nd-term}) 
into (\ref{eq:to-bnd-rhs}) yields
\begin{equation}
\|h\nabla\psi\|_2^2  
\le
{10} \|\psi\nabla h\|_2^2  + 
3 n M  \|h \psi\|_2^2.
\end{equation}
Finally, by use of $|\nabla h| \le 2/r$,
\begin{equation}
\int_{B_r(x)} |\nabla\psi|^2 \, d^{D}x 
\le
\int h^2 |\nabla\psi|^2 \, d^{D}x 
\le
\left(3n M + \frac{40}{r^2}\right)
\int_{B_{2r}(x)} |\psi|^2 \, d^{D}x.
\end{equation}
\end{proof}

\begin{proof}[Proof of Prop. \ref{prop:zero-infinite-order}]
Recall that $Z$ denotes the set where $\psi=0$, assumed of nonzero
Lebesgue measure. Almost every point $x\in Z$ is a point of Lebesgue density, 
which means that
\begin{equation}
\lim_{r\to 0} \frac{|Z \cap B_r(x)|}{|B_r(x)|} = 1.
\end{equation}
It is convenient to express this in the form
\begin{equation}
{|Z^c \cap B_r(x)|} \le [r K(r)]^{D}
\end{equation}
for some monotonic function $K:\Real_+ \rightarrow \Real_+$ with $K(0)=0$.
Then, by H\"older's inequality,
\begin{equation}
\label{eq:estimate-2}
  \int_{B_r(x)} |\psi|^2 \, d^Dx= 
  \int_{B_r(x)\cap Z^c} |\psi|^2 \, d^Dx
\le
  \left( \int_{B_r(x)\cap Z^c} |\psi|^{\frac{2D}{D-2}} \, d^Dx\right)^{\frac{D-2}{D}} 
|B_r(x)\cap Z^c|^{\frac{2}{D}}
\end{equation}
To estimate the integral on the right-hand side, we apply
an extension theorem
[see, for example, Thm. IV.4.26 of Ref.~\onlinecite{Adams},
or Thm. 5.20 of Ref. \onlinecite{Robinson-Infinite-Dim-Dyn-Sys}]
which says that there is a constant $C$ 
such that every $f\in H^1(B_1(0))$ 
has an extension to $\tilde{f} \in H^1(\Real^D)$
satisfying $\|\tilde{f}\|_{H^1(\Real^D)} \le C \|f\|_{H^1(\Real^D)}$.
Combined with a Sobolev inequality, this yields
\begin{equation}
\left( \int_{B_1(0)} |f|^{\frac{2D}{D-2}} \,d^Dx \right)^{\frac{D-2}{D}} 
\le C \|f\|_{H^1(B_1(0))}.
\end{equation}
Now apply this to $f(y) = \psi(r(x-y))$, using $d^Dy = r^{-D} d^D(r(x-y))$
and $\nabla f(y) = r (-\nabla\psi)(r(x-y))$, to find
\begin{equation}
\left( \int_{B_r(x)\cap Z^c} |\psi|^{\frac{2D}{D-2}} \, d^Dx \right)^{\frac{D-2}{D}} 
\le C \left(
\int_{B_r(x)}|\nabla\psi|^2\, d^Dx + \frac{1}{r^2}\int_{B_r(x)}|\psi|^2 \, d^Dx \right).
\end{equation}
Substituting this back into (\ref{eq:estimate-2}) and then applying Lemma \ref{lem:doubling}
results in
\begin{align}
  \int_{B_r(x)} |\psi|^2 \, d^Dx
& \le C K(r)^2 \left(r^2\int_{B_r(x)}|\nabla\psi|^2 \, d^Dx + \int_{B_r(x)}|\psi|^2 \, d^Dx \right)
\nonumber \\
& \le C(1+c(x)) K(r)^2 \int_{B_{2r}(x)}|\psi|^2 \, d^Dx.
\label{eq:doubling}
\end{align}
It is now a straight line from this inequality to the desired conclusion,
$\int_{B_r(x)}|\psi|^2 = {\cal O}(r^N)$ as $r \to 0$.
With the definitions
\begin{equation}
F(r) \defeq \int_{B_r(x)}|\psi|^2\, d^Dx,\quad g(2r) \defeq C(1+c(x)) K(r)^2,
\end{equation}
an induction starting from (\ref{eq:doubling}) yields
$F(2^{-n}r) \le g(r)^{n}{F(r)}$ for $n\in\Nat$. Thus,
\begin{equation}
\frac{F(2^{-n}r)}{(2^{-n}r)^N}
\le
\left({2^{N}}{g(r)}\right)^n \frac{F(r)}{r^N}.
\end{equation}
Choosing $r$ small enough that $2^N g(r) < 1$, the right-hand
side tends to zero as $n\to\infty$. Appeal to monotonicity
[to cover the intervals $(2^{-(n+1)}r,2^{-n}r)$] now gives the required conclusion:
$F(x) = {\cal O}(x^N)$ as $x\to 0$ for all $N\in\Nat$.
\end{proof}


%

\end{document}